\title{Optimal Synthesis of Opacity-Enforcing Supervisors for Qualitative and Quantitative Specifications}
\author{Yifan~Xie,~\IEEEmembership{Student Member,~IEEE,}
	Xiang~Yin,~\IEEEmembership{Member,~IEEE,}
    Shaoyuan~Li,~\IEEEmembership{Senior Member,~IEEE}
	\thanks{This work was  supported by the National Natural Science Foundation of China (62061136004, 61803259, 61833012) and by Shanghai Jiao Tong University Scientific and Technological Innovation Funds.
	}
	\thanks{Yifan Xie, Xiang Yin and Shaoyuan Li  are with Department of Automation
		and Key Laboratory of System Control and Information Processing,
		Shanghai Jiao Tong University, Shanghai 200240, China.
		{\tt\small  \{xyfan1234,yinxiang,syli\}@sjtu.edu.cn.} (Corresponding Author: Xiang Yin)}
}
\newtheorem{mydef}{Definition}
\newtheorem{mythm}{Theorem}
\newtheorem{myprob}{Problem}
\newtheorem{mycol}{Corollary}
\newtheorem{mypro}{Proposition}
\newtheorem{myexm}{Example}
\newtheorem{remark}{Remark}
\begin{document}
	
	\maketitle

	\begin{abstract}
	In this paper, we investigate both qualitative and quantitative synthesis of optimal privacy-enforcing supervisors for partially-observed discrete-event systems. 
	We consider a dynamic system whose information-flow is partially available to an intruder, which is modeled as a passive observer. We assume that the system has a ``secret" that does not want to be revealed to the intruder. 
	Our goal is to synthesize a supervisor that controls the system in a \emph{least-restrictive} manner such that the closed-loop system meets the privacy requirement.
	For the qualitative case, we adopt the notion of \emph{infinite-step opacity} as the privacy specification by requiring that the intruder can never determine for sure that the system is/was at a secret state for any specific instant. 
	If the qualitative synthesis problem is not solvable or the synthesized solution is too restrictive, 
	then we further investigate the quantitative synthesis problem so that the secret is revealed  (if unavoidable) \emph{as late as possible}.  
	Effective algorithms are provided to solve both the qualitative and quantitative synthesis problems. 
	Specifically, by building suitable information structures that involve information delays, we show that the optimal qualitative synthesis problem can be solved as a safety-game.  
	The optimal quantitative synthesis problem can also be solved as an optimal total-cost control problem over an augmented information structure.   
    Our work provides a complete solution to the standard infinite-step opacity control problem, which has not been solved without assumption on the relationship between controllable events and observable events. 
    Furthermore,  we generalize the opacity enforcement problem to the numerical setting by introducing the secret-revelation-time as a new quantitative measure. 
	\end{abstract}
	\begin{IEEEkeywords}
    Opacity, Supervisory Control, Discrete Event Systems, Optimal Control.
    \end{IEEEkeywords}
	
	\section{Introduction}
	\IEEEPARstart{P}{rivacy} and security issues have been becoming increasingly more important concerns in Cyber-Physical Systems (CPS) as communications and information exchanges among smart devices may cause information leakage  that threatens the system. Formal model-based methods provide rigorous, algorithmic and correct-by-construction  approaches towards the analysis and design of safety-critical CPS  whose  security and privacy demands are ever-increasing. 
	In this paper,  we investigate a formal information-flow security property called \emph{opacity} in the context of  Discrete-Event Systems (DES).
	Roughly specking,  a system is \emph{opaque} if its ``secret" can never be revealed to a malicious intruder that can access the information-flow of the system.
	The notion of opacity is essentially a confidentiality property that captures the plausible deniability for the system's secret \cite{bryans2008opacity}.

	In the context of DES, opacity has been studied very extensively in the past few years; see, e.g., some recent works \cite{berard2015probabilistic,keroglou2016probabilistic,tong2017verification,chen2017quantification,zhang2017infinite,basile2018algebraic,noori2018compositional,cong2018line,liu2020verification,saadaoui2020current,mohajerani2020transforming,lefebvre2020privacy,mohajerani2020compositional,lin2020information} and the survey papers \cite{jacob2016overview,lafortune2018history}.  
	Especially, to characterize different types of security requirements, different notions of opacity are proposed in the literature.
	For example, current-state opacity  (respectively,  initial-state opacity) requires that the intruder should never know for sure that the system is currently at (respectively, was initially from)  a secret state by utilizing the information available up to the current instant \cite{lin2011opacity}.
	In many situations, the intruder may further use future information  to better infer the security status of the system for some previous instants, which is essentially an information smoothing process. 
	To capture this scenario, the notions of $K$-step opacity \cite{saboori2012verification,falcone2015enforcement} and infinite-step opacity \cite{saboori2012verification,yin2017new,lan2020ifac} are proposed.
	Particularly, infinite-step opacity is the strongest one among all notions of opacity mentioned above,
	which requires that the intruder can never know for sure that the system is/was at a secret state for any specific instant even by using future information.

	When a system is verified to be non-opaque, one important problem  is to enforce opacity via some  mechanisms. 
	In general, there are two approaches for enforcing opacity: 
    one is to control the actual behavior of the system so that those secret-revealing behaviors can be avoided \cite{badouel2007concurrent,takai2008formula,ben2011supervisory,darondeau2014enforcing,partovi2020opacity,yin2016uniform,tong2018current,dubreil2010supervisory,zinck2020enforcing}
    the other one is to change the information-flow of the system so that the intruder can be ``cheated" or be ``confused" \cite{cassez2012synthesis,zhang2015max,rudie2019tase,yin2020synthesis,wu2018synthesis,wu2018privacy,ji2018enforcement,ji2019enforcing,liu2020k}.
	In particular, the first approach is essentially the supervisory control of opacity that aims to find a \emph{supervisor}  that restricts the behavior of the system dynamically such that the closed-loop system is opaque.
	For example, \cite{dubreil2010supervisory} studies the supervisory control problem for current-state opacity  assuming that all controllable events are observable and the intruder cannot observe more events than the supervisor.
	The work of  \cite{tong2018current} relaxes above assumptions but assumes that the intruder does not know the implementation of the supervisor. 
	In \cite{xie2020enforcing}, non-deterministic supervisors are used to enhance the plausible plausible deniability of the controlled system.
	Note that all the above mentioned works on opacity-enforcing supervisory control consider current-state opacity.

	In this paper, we study the problem of synthesizing optimal supervisors for \emph{infinite-step opacity} and its quantitative generalization. 
	Enforcing infinite-step opacity  is significantly more challenging than the standard current-state opacity enforcement problem.
	Specifically, in the infinite-step opacity setting, whether or not a secret can be revealed to the intruder not only depends on the information available currently, but also depends on the information in the future.
	To handle this future information, in the verification problem, we can look ahead in the original open-loop system by ``borrowing" the future information from the fixed plant model; see, e.g., \cite{saboori2012verification,yin2017new}.
	However, in the synthesis problem, the future information of the closed-loop system is \emph{unknown} and depends on the control policy in the future which is \emph{to be determined}.
	Hence, how to handle the dependency between the delayed information and the future control policy is the main difficulty in the synthesis of infinite-step-like opacity.
	In this paper, we propose  effective approaches that solve  both the qualitative and quantitative versions of the infinite-step opacity control synthesis problem. Our main contributions are twofold:  
	\begin{itemize}
	    \item  
	    First, we  provide  a new synthesis algorithm that solves the standard qualitative infinite-step opacity supervisory control problem without any assumption on the relationship between controllable events and observable events. Our approach is based on the generic structure of bipartite transition systems (BTS) \cite{yin2016uniform} and a new type of information-state that effectively captures the issue of information delay while avoiding the dependence on  future control policy. 
	    In particular, we show that the proposed synthesis algorithm is sound and complete, and the resulting supervisor is maximally-permissive in terms of language inclusion. 
	    Therefore, for the qualitative part, we completely solve the standard infinite-step opacity control problem, which was only partially solved in the literature under restrictive assumptions. 
	    \vspace{5pt}
	    \item 
	    Furthermore, we investigate a quantitative version of the opacity-enforcing control problem by introducing a secret revelation cost based on the notion of \emph{secret-revelation-time}, i.e., the earlier the secret is revealed, the higher the cost will be. This leads to a numerical generalization of infinite-step opacity (as well as the notion of $K$-step opacity \cite{saboori2012verification,falcone2015enforcement}). The control objective is to reveal each visit of secret states (if unavoidable) \emph{as late as possible}.
	    By suitably augmenting timing information into the information state space, we show that this problem can also be solved effective as an optimal worst-case total-cost control problem.   
	    Our approach provides a new angle for infinite-step opacity synthesis by using the secret-revelation-time as a quantitative measure. 
   \end{itemize}
 
	Our work is also related to several  works  in the  literature. 
	Regarding the qualitative synthesis problem,  as we mentioned earlier, most of the existing works on opacity-enforcing supervisory control only consider current-state opacity.
	One exception is \cite{saboori2011opacity}, where the authors propose a method to enforce infinite-step opacity by synthesizing \emph{a set of supervisors} that run synchronously.
	However, our approach synthesizes a single supervisor directly.
	More importantly,  the approach in \cite{saboori2011opacity} is based on the restrictive assumption that all controllable events are observable; this assumption is also relaxed in our approach. 
	In  our recent work \cite{yin2020synthesis}, we consider the synthesis of dynamic masks for infinite-step opacity. However, dynamic masks can only change the observation of the system, while supervisor can change the \emph{actually behavior} of the system without interfering the information-flow directly. Hence, different information-state updating rules are proposed here to handle the control problem. In terms of quantification of opacity,  most of the existing works  focus on qualifying \emph{how opaque} the system is in terms of probability measure; see, e.g.,  \cite{saboori2014current,berard2015probabilistic,chen2017quantification,yin2019infinite,deng2020fuzzy,liu2020notion}.  
	In the very recent work \cite{lefebvre2020exposure},  
	the authors proposed the concept opacity revelation time to characterize how long the initial secret is kept. 
	This concept is closely related to our secret-revelation-time. 
	However, here we essentially consider a delayed-state-estimation problem for each visit of secret states, which is much more involved than the initial-state-estimation problem considered in \cite{lefebvre2020exposure}. Furthermore, we consider the control synthesis problem, while \cite{lefebvre2020exposure} only considers the verification problem.

	The remaining part of the paper is organized as follows.
	Section~\ref{sec:2} presents dome necessary preliminaries.
	In Section~\ref{sec:3}, we formulate the qualitative opacity-enforcing control problem.
	In Section~\ref{sec:4}, we present a new class of information states for infinite-step opacity.
	In Section~\ref{sec:5}, we first define a bipartite transition systems and then present a synthesis algorithm that returns a maximally-permissive partial-observation supervisor to enforce infinite-step opacity.
	Furthermore, we quantify infinite-step opacity via secret-revelation-time and solve the quantitative synthesis problem in section~\ref{sec:6}.
	Finally, we conclude the paper in Section~\ref{sec:7}. 
	Some preliminary results for the qualitative part was presented in \cite{xie2020sup} without proof. 
	This paper presents complete proofs as well as detailed explanations.
	Moreover,  we further investigate the quantitative synthesis problem by quantifying the secret-revelation-time.  
	The techniques for solving the qualitative and quantitative problems are also different.

	\section{Preliminary}\label{sec:2}
    \subsection{System Model}
    We assume basic knowledge of DES and use common notations; see, e.g., \cite{Lbook}. 
    Let $\Sigma$ be an alphabet. 
    A string $s=\sigma_1\cdots\sigma_n$ is a finite sequence of events; $|s|=n$ denotes its length. 
    We denote by $\Sigma^*$ the set of all strings over $\Sigma$ including the empty string $\epsilon$ whose length is zero. 
    A language $L\subseteq \Sigma^*$ is a set of strings and we define $\overline{L}=\{u\in \Sigma^*: \exists v\in \Sigma^*, uv\in L   \}$ as the prefix-closure of $L$. 
    For the sake of simplicity, we write $s\leq t$ if $s\in \overline{\{t\}}$;  
    also we write $s<t$ if $s\leq t$ and $s\neq t$.
    
    A DES is modeled as a deterministic finite-state automaton
    \begin{equation}\label{DES}
        G=(X,\Sigma,\delta,x_0),\nonumber
    \end{equation}
    \noindent where $X$ is the finite set of states, $\Sigma$ is the finite set of events, $\delta:X\times\Sigma\rightarrow X$ is the partial transition function, where $\delta(x,\sigma)=y$ means that there is a transition labeled by event $\sigma$ from state $x$ to $y$, and $x_0\in X$ is the initial state.
    The transition function can also be extended to $\delta:X\times \Sigma^*\to X$ in the usual manner \cite{Lbook}.
    For simplicity, we write $\delta(x,s)$ as $\delta(s)$ when $x=x_0$.
    The language generated by $G$ is defined by $\mathcal{L}(G):=\!{\{s\in \Sigma^*: \delta(s)!}\}\!$, where $!$ means ``is defined''.

    When the system is partially observed, $\Sigma$ is partitioned into two disjoint sets: 
    \[
    \Sigma=\Sigma_o\dot{\cup}\Sigma_{uo},
    \]
    where $\Sigma_o$ is the set of observable events and $\Sigma_{uo}$ is the set of unobservable events.
    The natural projection $P:\Sigma^*\rightarrow \Sigma_o^*$ is defined recursively by
    \[
    P(\epsilon)=\epsilon  \text{ and }
    P(s\sigma)=\left\{
    \begin{aligned}
    &P(s)\sigma &\text{if }& \sigma\in\Sigma_o \\
    &P(s)       &\text{if }& \sigma\in\Sigma_{uo}
    \end{aligned}
    \right..
    \]
    For any observation $\alpha\beta\in P(\mathcal{L}(G))$, we define $\hat{X}_G(\alpha\mid\alpha\beta)$ as the \emph{delayed state estimate} that captures the set of all possible states the system could be in at the instant when $\alpha$ is observed given the entire observation $\alpha\beta$, i.e.,
    \begin{equation} 
    \hat{X}_{G}(\alpha\mid\alpha\beta):=
    \left\{
    \delta(s)\! \in\! X: 
     st\!\in\! \mathcal{L}(G) ,
    P(s)\!=\!\alpha, P(st)\!=\!\alpha\beta 
    \right\} .\nonumber
    \end{equation}
    For simplicity, we define $\hat{X}_G(\alpha):=\hat{X}_G(\alpha\mid\alpha)$ as the current state estimate upon the occurrence of $\alpha$.

    \subsection{Secret and Intruder}
    We assume that the privacy specification of the  system is captured by a set of secret states $X_S\subseteq X$.
    We consider an intruder modeled as a passive observer that may also observe the occurrences of observable events.
    Then  the intruder can infer whether or not the system was/is at a secret state based on the information-flow available. 
    Specifically, we use the notion of  \emph{infinite-step opacity} to describe the privacy requirement of the system, 
    which says that for any  string that leads to a secret state, the intruder can never determine for sure whether the system is/was at a secret state no matter what future information is generated.
    \begin{mydef}\label{def:inf-opa}
    Given system $G$, a set of observable events $\Sigma_o$ and a set of secret states $X_S$, system $G$ is said to be infinite-step opaque w.r.t.\ $X_S$ and $\Sigma_o$ if
    \begin{equation}
    \forall\alpha\beta\in P(\mathcal{L}(G)): \hat{X}_{G}(\alpha\mid\alpha\beta)\nsubseteq X_S.\nonumber
    \end{equation}
    \end{mydef}

    The following example illustrates the notion of infinite-step opacity.

    \begin{figure}
    \centering
    \subfigure[System $G$]{\label{SystemG}
    \includegraphics[width=0.152\textwidth]{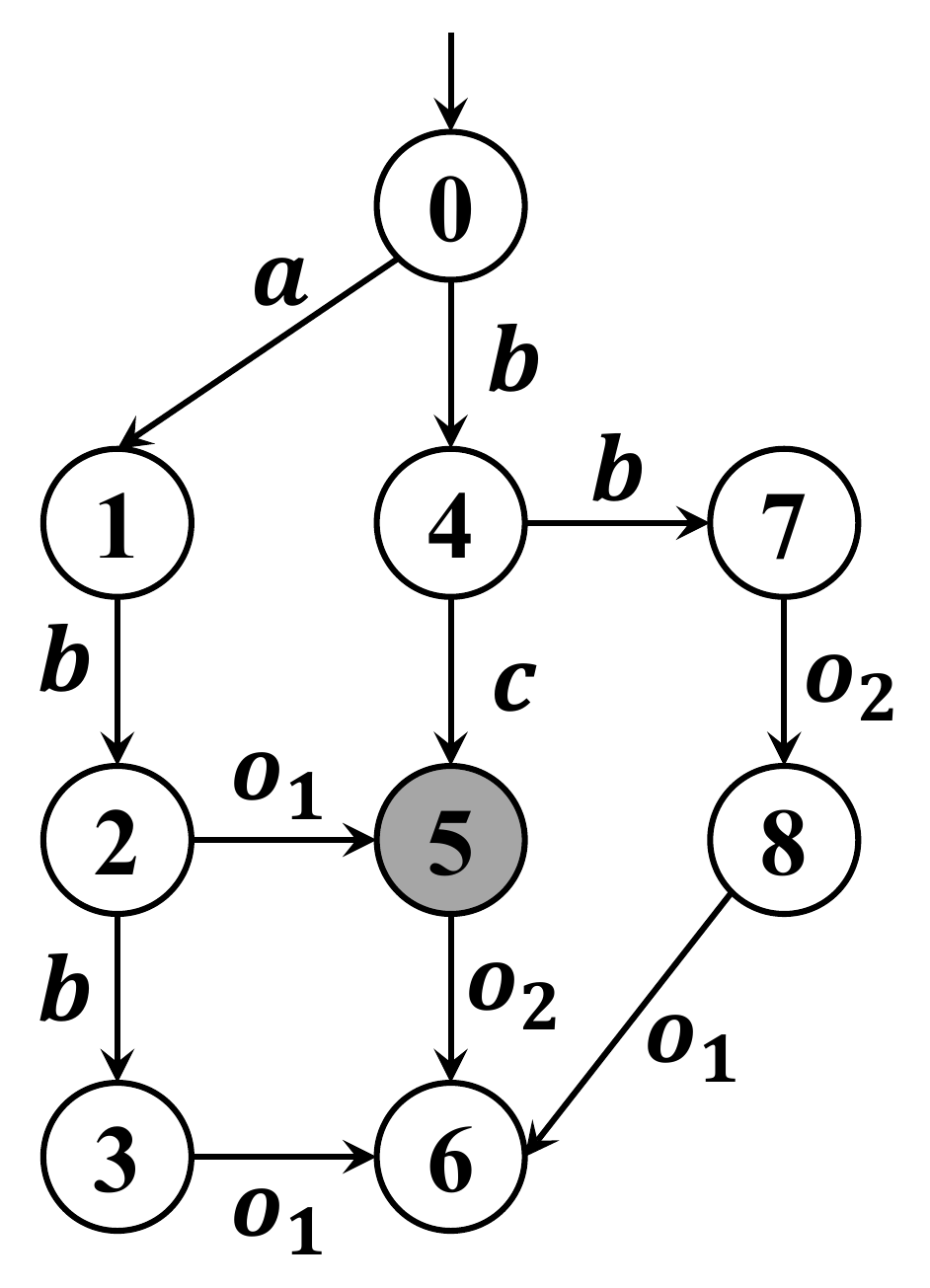}}
    \subfigure[System $G_1$]{\label{SystemG1}
    \includegraphics[width=0.152\textwidth]{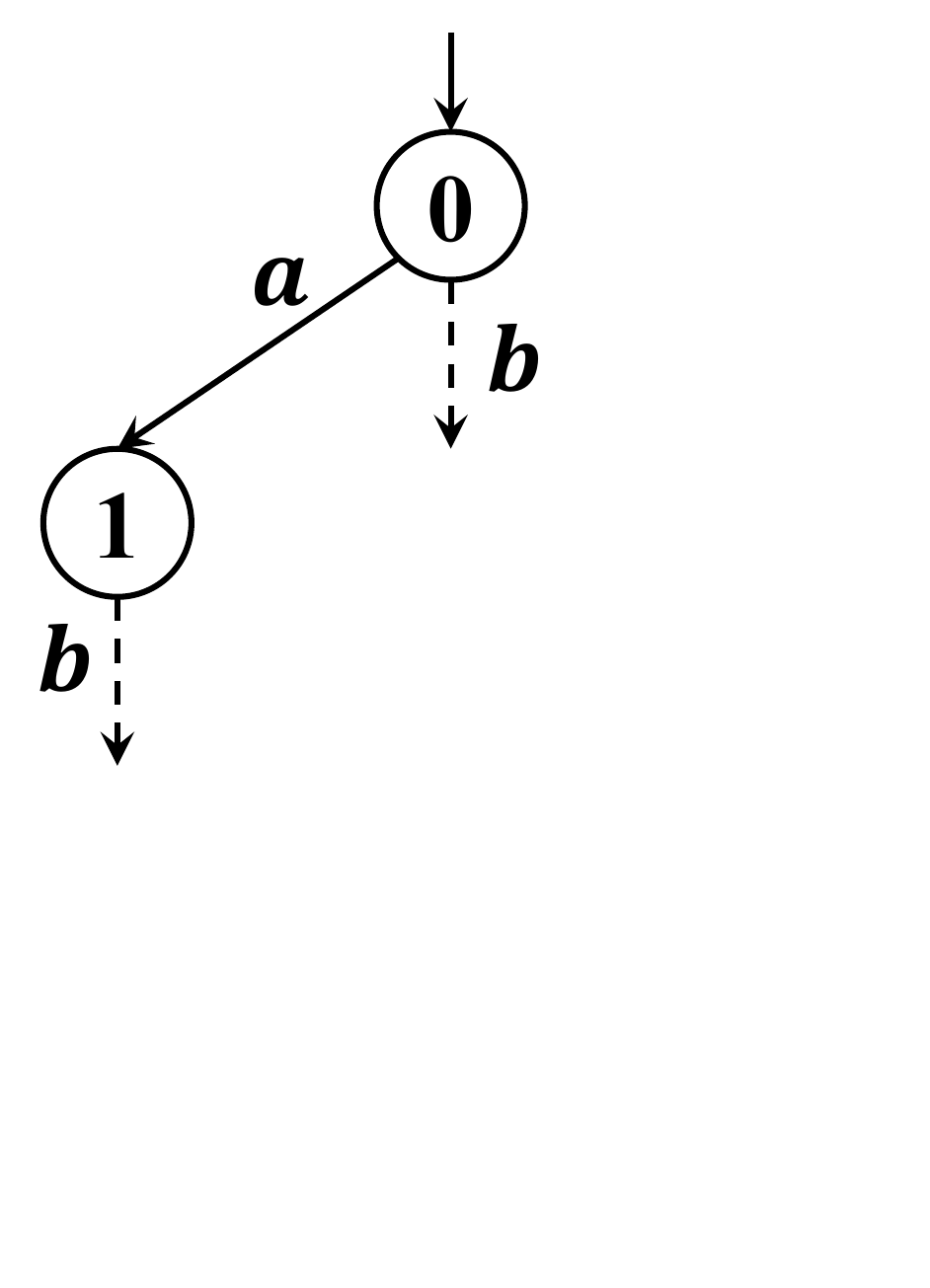}}
    \subfigure[System $G_2$]{\label{SystemG2}
    \includegraphics[width=0.152\textwidth]{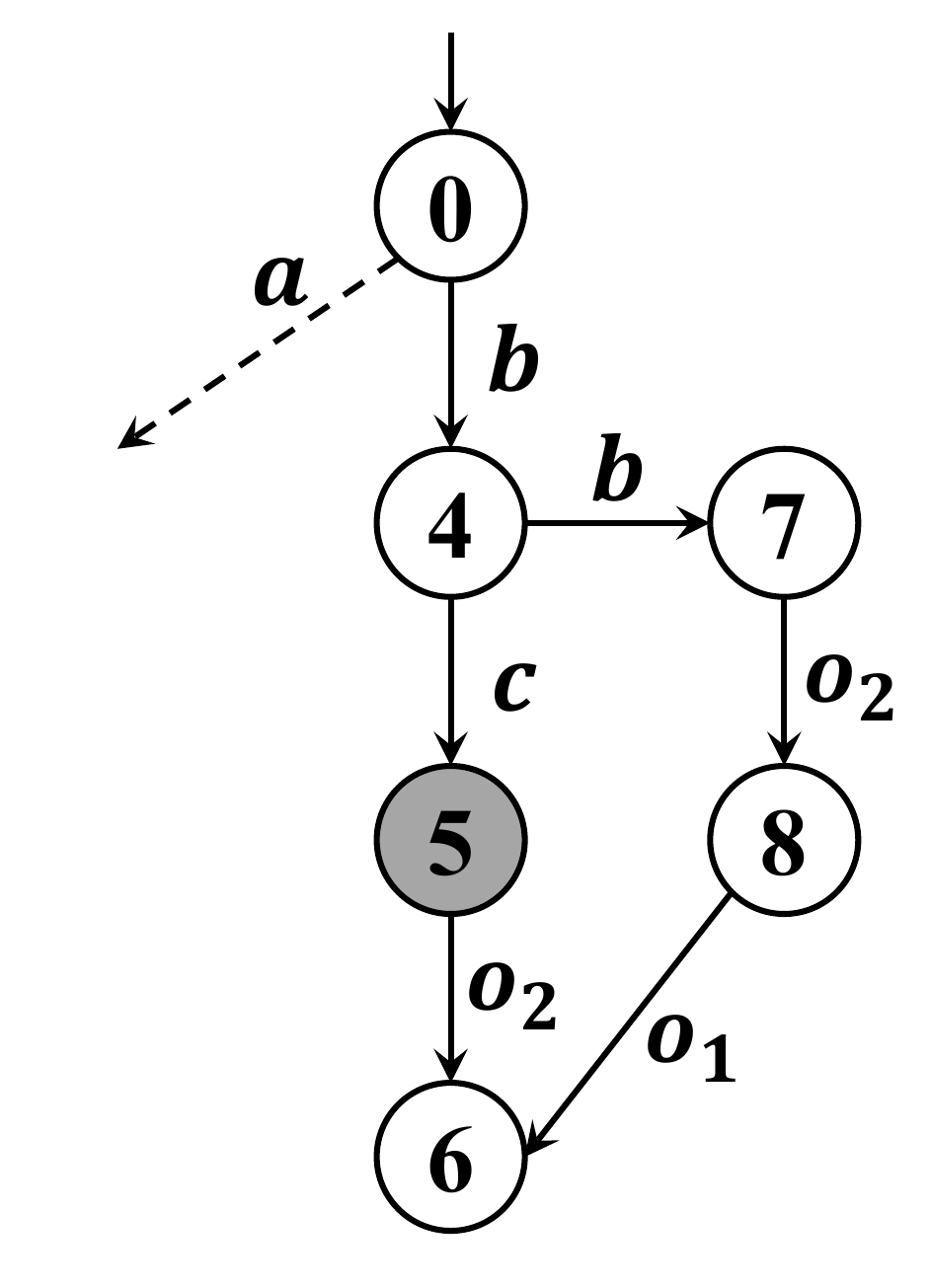}}
    \caption{For $G$: $\Sigma_o=\{o_1,o_2\},\Sigma_c=\{a,b,c\},X_s=\{5\}$.}
    \label{systemG}
    \end{figure}

    \begin{myexm}\label{exm:inf-opa}\upshape
    Let us consider system $G$ in Fig.~\ref{SystemG} with $\Sigma_o=\{o_1,o_2\}$ and $X_s=\{5\}$.
    Then $G$ is not infinite-step opaque w.r.t.\ $X_s$ and $\Sigma_o$.
    To see this, we consider string $abo_1$ with $P(abo_1)=o_1$ and string $abbo_1$ with $P(abbo_1)=o_1$.
    Then we know that $\hat{X}_G(o_1\mid o_1)=\{5,6\}$, i.e., when the intruder observe $o_1$ the system may be at state $5$ or state $6$.
    This says that when string $o_1$ is observed, the intruder cannot know for sure that the system is currently at the secret state. The system is actually current-state opaque.
    However, if we consider string $abo_1o_2$ with $P(abo_1o_2)=o_1o_2$, then the delayed state estimate is $\hat{X}_G(o_1\mid o_1o_2)=\{5\}\subseteq X_s$.
    This means that the intruder knows for sure that the system was at a secret state one step ago  when string $o_1o_2$ is observed.
    Upon the occurrence of $o_1o_2$, the secret state $5$ will be revealed to the intruder.
    Hence, the system is not infinite-step opaque.
    \end{myexm}

	\section{Qualitative  Privacy-Enforcing Control Problem}\label{sec:3}
    We start by the qualitative synthesis problem for infinite-step opacity. 
    When a given system $G$ is not opaque, i.e.,  the secret of the system can be revealed to the intruder, we need to enforce opacity on the system. One typical approach is to synthesize a  \emph{supervisor} that restricts the system's behavior to a sublanguage that satisfies the privacy requirement.

    In the framework of supervisory control, a supervisor can restrict the behavior of  $G$ by dynamically enabling/disabling some  events.
    In this setting, We assume that the events set is further  partitioned as 
    \[
    \Sigma=\Sigma_c\dot{\cup}\Sigma_{uc},
    \]
    where $\Sigma_c$ is the set of controllable events and $\Sigma_{uc}$ is the set of uncontrollable events.
    Controllable events are events that can be prevented from happening, or disabled by  the supervisor;  uncontrollable events cannot be disabled by the supervisor.
    A control decision $\gamma\in2^{\Sigma}$ is said to be admissable if $\Sigma_{uc}\subseteq\gamma$, namely uncontrollable events can never be disabled.
    We define $\Gamma=\{\gamma\in2^{\Sigma}:\Sigma_{uc}\subseteq\gamma\}$ as the set of admissable control decisions or control patterns.
    Since a supervisor can only make decisions based on its observation, a partial-observation supervisor is a function 
    \[
    S:P(\mathcal{L}(G))\rightarrow\Gamma.
    \]
    We use  notation $S/G$ to represent the closed-loop system under supervision and the language generated by $S/G$, denoted by $\mathcal{L}(S/G)$, is defined recursively by
    \begin{itemize}
        \item 
        $\epsilon\in \mathcal{L}(S/G)$; and 
        \item 
        for any $s\in \Sigma^*,\sigma\in \Sigma$, we have $s\sigma\in \mathcal{L}(S/G)$ iff
        $s\sigma\in \mathcal{L}(G),s\in \mathcal{L}(S/G)$ and $\sigma\in S(P(s))$.
    \end{itemize}

    In this paper, we want to synthesize a supervisor that disables events dynamically based on the observation trajectory such that the controlled system $S/G$ satisfies the privacy requirement, e.g., being infinite-step opaque for the qualitative setting.
    Note that, when the implementation of the supervisor becomes a public information, the intruder may further know that some behaviors in the original open-loop system are no longer feasible in the closed-loop system.
    Therefore, for any  observable string  $\alpha\beta\in P(\mathcal{L}(S/G))$, when the system is controlled by supervisor $S$, we define
    \begin{equation}
    \begin{split}
    \hat{X}_{S/G}(\alpha\mid\alpha\beta)\!:=\!
    \left\{
    \delta(s) \!\in\! X\!: 
    \begin{gathered}
      st \in \mathcal{L}(S/G), \\
    P(s)=\alpha, P(st)=\alpha\beta
    \end{gathered}
    \right\}
    \end{split}.\nonumber
    \end{equation}
    as the delayed state estimate in the closed-loop system.
    Similarly, we define $\hat{X}_{S/G}(\alpha)=\hat{X}_{S/G}(\alpha\mid\alpha)$.
    Then we say that the closed-loop system $S/G$ is  infinite-step opaque w.r.t.\ $X_S$ and $\Sigma_o$ if
    \begin{align}\label{eq:close-inf}
    \forall\alpha\beta\in P(\mathcal{L}(S/G)): \hat{X}_{S/G}(\alpha\mid\alpha\beta)\nsubseteq X_S.
    \end{align}
    Clearly, the less events enabled, the system is more likely to be opaque. 
    In the supervisory control framework, it is desirable that the supervisor enables as many events as possible to leave autonomy for the original system. Therefore, our goal is to find a maximally permissive supervisor (in terms of language inclusion).
    Then the qualitative privacy-enforcing supervisory control problem for infinite-step opacity  is formulated as follows.

    \begin{myprob}(Qualitative Opacity-Enforcing Control Problem)\label{problem}
    Given system $G$ and a set of secret states $X_S\subseteq X$, synthesize a partial-observation supervisor $S: P(\mathcal{L}(G))\rightarrow\Gamma$, such that:
    \begin{enumerate}[(i)]
	\item
	$S/G$ is infinite-step opaque w.r.t.\ $X_S$ and $\Sigma_o$; and
    \item
    For any other supervisor $S'$ satisfying (i), we have $\mathcal{L}(S/G)\not\subset\mathcal{L}(S'/G)$.
    \end{enumerate}
    \end{myprob}
    The second condition implies that the synthesized supervisor is maximal in the sense that its permissiveness cannot be improved anymore.
    As we will see in the following example, such maximal solution is not unique in general.

    \begin{myexm}\label{exm:111}\upshape
    To enforce infinite-step opacity for system $G$ shown in Fig.~\ref{SystemG}, we need to find a supervisor such that the close-loop behavior of the controlled system is opaque.
    Let $\Sigma_c=\{a,b,c\}$ be the set of controllable events.
    By disabling event $b$ initially, we can get an infinite-step opaque system shown in Fig.~\ref{SystemG1}.
    By disabling event $a$ initially, we can get another infinite-step opaque system, shown in Fig.~\ref{SystemG2}. 
    However, the union of $\mathcal{L}(G_1)$ and $\mathcal{L}(G_2)$ is not a feasible solution  since the supervisor needs to  disable $b$ initially in $\mathcal{L}(G_1)$ but needs to enable $b$ initially in $\mathcal{L}(G_2)$.
    One can easily check that  both $\mathcal{L}(G_1)$ and $\mathcal{L}(G_2)$ satisfy condition (2) in Problem \ref{problem}. However, they are two incomparable maximal solutions. 
    \end{myexm}

    Finally, we introduce some necessary  operators that will be used for further developments.
    Let $q\in2^X$ be a set of states, $\gamma\in\Gamma$ be a control decision and $\sigma\in\Sigma_o$ be an observable event.
    The \emph{unobservable reach} of  $q\subseteq X$ under control decision $\gamma\subseteq\Sigma$ is defined by
    \begin{equation}\label{unobservable reach} 
    U\!R_\gamma(q):=
    \left\{
    \delta(x,w)\in X:
    x\in q, w \in (\Sigma_{uo}\cap\gamma)^*
    \right\}, 
    \end{equation}
    which is the set of states that can be reached from states in $q$ via unobservable strings allowed in $\gamma$.
    The observable reach of $q\subseteq X$ upon the occurrence of $\sigma\in\Sigma_{o}$ is defined by
    \begin{equation}\label{observable reach}
    N\!X_\sigma(q):=  \{    \delta(x,\sigma)\in X: x\in q  \}.
    \end{equation}
    Similarly, let $\rho\in 2^{X\times X}$ be a set of   state pairs. 
    Intuitively, for each $(x,x')\in \rho$, $x$ represents   where the system was from at some instants and 
    $x'$ represents where the system is currently at. 
    Let $\gamma \in \Gamma$ be a control decision and $\sigma\in \Sigma_{o}$ be an observable event.
    We define:
    \begin{align}
     \widetilde{U\!R}_{\gamma}(\rho)
    \!=\!&
    \left\{\!
    (x,\delta(x',w)  )\!\in\! X\!\times\! X\!:\!
    (x,x')\!\in\! \rho, w\!\in\! (\Sigma_{uo}\!\cap\!\gamma)^*
    \!\right\} \\    
    \widetilde{N\!X}_{\sigma}(\rho)
    \!=\!&
    \left\{\!
    (x,\delta(x',\sigma) )\!\in\! X\!\times\!X \!:\!  (x,x')\!\in\! \rho   \right\}\\
    \odot_{\gamma}(q)
    \!=\!&
    \left\{\!
    (x,\delta(x,w) )\in X\!\times\!X \!:\!  x\in q, w\in(\Sigma_{uo}\!\cap\! \gamma)^*
    \!\right\}  
    \end{align} 
    Intuitively, $\widetilde{U\!R}_{\gamma}(\rho)$ and $\widetilde{N\!X}_{\sigma}(\rho)$, respectively, modify the unobservable reach and the observable reach, by not only tracking all possible current states, but also tracking where they come from. 
    Also, $\odot_{\gamma}(q)$ maps $q$ to a set of state pairs such that, in each pair of states, the first state can reach the second state via  enabled  but unobservable strings.

    \begin{myexm}\upshape
    Let us consider system $G$ in Fig.~\ref{SystemG} again.
    Let $q=\{1,5\}\in 2^X$ be a set of states and $\gamma=\{o_1,o_2,a,b\}\in \Gamma$ be a control decision.
    Then we have $U\!R_\gamma(\{1,5\})=\{1,2,3,5\}$.
    Upon the occurrence of enabled observable event $o_1\in \Sigma_o\cap\gamma$, we know that $N\!X_{o_1}(\{1,2,3,5\})=\{5,6\}$.
    Let $q=\{(0,1),(4,5)\}\in 2^{X\times X}$ be a set of state pairs, 
    which represents that the system is either (i) currently at state $1$ from state $0$; or (ii) currently at state $5$ from state $4$.
    Let $\gamma=\{o_1,o_2,a,b\}\in\Gamma$ be a control decision.
    Then we have $\widetilde{U\!R}_{\{o_1,o_2,a,b\}}(\{(0,1),(4,5)\})=\{(0,1),(0,2),(0,3),(4,5)\}$.
    Upon the occurrence of enabled observable event $o_1\in\Sigma_o\cap\gamma$, we have $\widetilde{N\!X}_{o_1}(\{(0,1),(0,2),(0,3),(4,5)\})=\{(0,5),(0,6)\}$.
    Besides, let $q=\{1,2,3,5\}\in 2^X$ and $\gamma=\{o_1,o_1,a,b\}$, we have $\odot_{\gamma}(q)=\{(1,1),(1,2),(1,3),(2,2),(2,3),(3,3),(5,5)\}$.
    \end{myexm}
	
	\section{Information-state and its flow}\label{sec:4}
	To enforce infinite-step opacity, the main difficulty is that the state estimation is \emph{delayed} and one can use future observation to improve its knowledge about the system at some previous instants.  
	In this section, we study how information evolves in the closed-loop system when delayed information is involved.
	%.
	\subsection{Notion of Information-State}
In the synthesis of partially-observed systems, $2^X$ is usually chosen as the \emph{information-states} representing current information of the system.
This information-state has been shown to be suitable for current-state opacity enforcement.
However, infinite-step opacity requires that the secret cannot be revealed at any instant currently and in the future; hence, $2^X$ is not sufficient enough.

    Note that, the requirement of infinite-step opacity in Equation~\eqref{eq:close-inf} can be equivalently written as
    \begin{align}\label{eq:reform}
    \forall\alpha \in P(\mathcal{L}(S/G)),\forall \alpha' \leq  \alpha : \hat{X}_{S/G}(\alpha' \mid \alpha)\nsubseteq X_S.
    \end{align}
    This reformulation suggests that, to enforce infinite-step opacity, it is sufficient to capture the delayed state estimates of all previous instants.
    Therefore, instead of using $2^X$,  we use another set of information-states  defined by
    \[
    I:=2^X\times 2^{2^{X\times X}}.
    \]
    Then each information-state $\imath\in I$ is in the form of $\imath=(C(\imath),D(\imath))$, where
    \begin{itemize}
	\item
	The first component $C(\imath)\in 2^X$ is a set of states that captures the current state estimate of the system; and
	\item
	The second component $D(\imath)\in 2^{2^{X\times X}}$ is a set of state pairs that captures all possible delayed state estimates in history.
    Specifically,   each element $\rho\in  D(\imath)$ is in the form of $\{(x_1,x_1'),\dots, (x_k,x_k')\}$, 
    where $x_i\in X$ represents the state of the system at some previous instants and $x'_i\in X$ represents the  current state of the system, 
    and $\rho$ contains all possibilities for that instant. 
    Then  $D(\imath)=\{\rho_1,\dots,\rho_n\}$ essentially contains all such sets of pairs for all previous instants.
    \end{itemize}

    \subsection{Information-State Updating Rule}
    Suppose that the system's  information-state is $\imath=(C(\imath),D(\imath))$.
    Then, upon the occurrence of an observable event  $\sigma\in \Sigma_{o}$ (should also be allowed by the previous control decision), the supervisor will issue a new  control decision $\gamma\in\Gamma$.
    Therefore, $(\sigma,\gamma)$ is our new information about the system and the information-state  $\imath=(C(\imath),D(\imath))$ needs to be updated to $\imath'=(C(\imath'),D(\imath'))$ as follows:
    
\begin{center}
        \tcbset{width=8.5cm,lefttitle=1.6cm,title=Information-State Updating Rule,colback=white}
    \begin{tcolorbox} 
    \vspace{-8pt}
    \begin{equation}
    	\!\! \!\!\!\!	\left\{
    \begin{aligned}\label{Updating}
    C(\imath')
    = &U\!R_{\gamma}(N\!X_{\sigma}(C(\imath)))\\
    D(\imath')
     = &\{ \widetilde{U\!R}_{\gamma}(\widetilde{N\!X}_{\sigma}(\rho))\in 2^{X\times X}:\rho\in D(\imath)\} \\
    &\cup \{ \odot_{\gamma}(C(\imath'))\}\\
    \end{aligned}
    \right.
    \end{equation}
    \end{tcolorbox}
\end{center}

    Intuitively, the first equation simply updates the current state estimate of the system, while the second equation updates all possible delayed state estimates (pairs) of previous instants and adds the current state estimate to the history.
    Then we consider a controlled system $S/G$.
    Let $\alpha=\sigma_1\sigma_2\cdots\sigma_n\in P(\mathcal{L}(S/G))$ be an observable string.
    Then the following information states evolution will happen
    \begin{equation}\label{In-Evo1}
    \imath_0\xrightarrow[\quad]{(\sigma_1,S(\sigma_1))}\imath_1\xrightarrow[\quad]{(\sigma_2,S(\sigma_1\sigma_2))}\cdots\xrightarrow[\quad]{(\sigma_n,S(\sigma_1\cdots\sigma_n))}\imath_n
    \end{equation}
    where $\imath_0=(U\!R_{S(\epsilon)}(\{x_0\} ),\{\odot_{S(\epsilon)}(U\!R_{S(\epsilon)}( \{x_0\} ))\})$ represents the initial information-state
    and each
    $\imath_{i-1}\xrightarrow[\quad]{(\sigma_{i},S(\sigma_1\cdots\sigma_i))}\imath_i$ means that
    $\imath_i$ is obtained from $\imath_{i-1}$ with new information $(\sigma_{i},S(\sigma_1\cdots\sigma_i))$ according to the updating rule in Equation~\eqref{Updating}.
    We denote by $\mathcal{I}(\alpha)$ the information state reached by $\alpha$, i.e., $\mathcal{I}(\alpha)=\imath_n$.

    We illustrate the above information updating procedure by the following example.
	
	\begin{myexm}\label{exm4}\upshape
		Let $S$ be the supervisor that results in   closed-loop system $G_2$ in Fig.~\ref{SystemG2}.
		Specifically, supervisor $S$ disables $a$ initially and then enables all events, i.e., $S(\epsilon)=\{o_1,o_2,b,c\}$ and $S(\alpha)=\Sigma$ for $\alpha\neq \epsilon$.
		Let us consider observable string $o_2o_1\in P(\mathcal{L}(S/G))$.
		Then we have
		$$
		\mathcal{I}(\epsilon)=(U\!R_{S(\epsilon)}(\{x_0\}),\{\odot_{S(\epsilon)}(U\!R_{S(\epsilon)}(\{x_0\}))\}),
		$$
		where
		$$
		\begin{aligned}
		C(\mathcal{I}(\epsilon))=&\{0,4,5,7\}\\
		D(\mathcal{I}(\epsilon))=&
		\left\{\!\!
		\left\{\!\!\!
		\begin{array}{c c}
		(0,0),\!(0,4),\!(0,5),\!(0,7),\\
		(4,4),\!(4,5),\!(4,7),\!(5,5),\!(7,7)
		\end{array}\!\!\!
		\right\}\!\!
		\right\}.
		\end{aligned}
		$$
		Once event $o_2$ is observed,  new control decision $S(o_2)=\{o_1,o_2,a,b,c\}$ is made, and the updated information-state is $\mathcal{I}(o_2)$, where
		$$
		\begin{aligned}
		C(\mathcal{I}(o_2))=&U\!R_{S(o_2)}(N\!X_{o_2}(C(\mathcal{I}(\epsilon))))\\
		=&\{6,8\}\\
		D(\mathcal{I}(o_2))=&\{\widetilde{U\!R}_{S(o_2)}(\widetilde{N\!X}_{o_2}(\rho)):\rho\in D(\mathcal{I}(\epsilon))\}\\
		&\cup\{\odot_{S(o_2)}(C(\mathcal{I}(o_2))\}\\
		=&
		\left\{\!\!\!\!
		\begin{array}{c c}
		\{(0,6),\!(0,8),\!(4,6),\!(4,8),\!(5,6),\!(7,8)\},\\
		\{(6,6),(8,8)\}
		\end{array}
		\!\!\!\!
		\right\}.
		\end{aligned}
		$$
		Then event $o_1$ is observed and new control decision $S(o_2o_1)=\{o_1,o_2,a,b,c\}$ is made.
		The  information-state is then updated to $\mathcal{I}(o_2o_1)$, where
		$$
		\begin{aligned}
		C(\mathcal{I}(o_2o_1))=&U\!R_{S(o_2o_1)}(N\!X_{o_1}(C(\mathcal{I}(o_2))))\\
		=&\{6\}\\
		D(\mathcal{I}(o_2o_1))
		=&\{\widetilde{U\!R}_{S(o_2o_1)}(\widetilde{N\!X}_{o_1}(\rho)):\rho\in D(\mathcal{I}(o_2))\}\\
		&\cup\{\odot_{S(o_2o_1)}(C(\mathcal{I}(o_2o_1))\}\\
		=&\{\!\{(0,6),\!(4,6),\!(7,6)\},\{(8,6)\},\{(6,6)\}\!\}.
		\end{aligned}
		$$
	    \end{myexm}
	
	\subsection{Property of the Information-State}
	In the following result, we formally show that the proposed information-state updating rule indeed yields the desired delayed state estimate in the controlled system.

	\begin{mypro}\label{prop:main}
		Let $S$ be a supervisor, $\alpha\in P(\mathcal{L}(S/G))$ be an observable string and $\mathcal{I}(\alpha)$ be the information-state reached.
		Then we have
		\begin{enumerate}[(i)]
			\item
			$C(\mathcal{I}(\alpha))=\hat{X}_{S/G}(\alpha)$; and
			\item
			$D(\mathcal{I}(\alpha))=\{\rho_{\alpha',\alpha}\in 2^{X\times X}:\alpha'\leq\alpha \}$, where
		\end{enumerate}
		\begin{equation}
		\rho_{\alpha',\alpha}  \!=\!
		\left\{
		(\delta(s),\delta(st))\!\in\! X \!\times\! X:
		\begin{gathered}
		  st\in \mathcal{L}(S/G), \\
		P(s)=\alpha',P(st)=\alpha
		\end{gathered}
		\right\}.
		\end{equation}
	\end{mypro}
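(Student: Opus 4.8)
The plan is to prove both claims (i) and (ii) simultaneously by induction on the length $n=|\alpha|$ of the observable string, since the information-state is itself generated recursively along observations in Equation~\eqref{In-Evo1}. For the base case $\alpha=\epsilon$ I would simply unfold the definition of the initial information-state $\imath_0$. Claim (i) reduces to the observation that $\hat{X}_{S/G}(\epsilon)$ is exactly the set of states reachable from $x_0$ by unobservable strings enabled by $S(\epsilon)$, which is $U\!R_{S(\epsilon)}(\{x_0\})$ by definition of $U\!R$. For claim (ii) the only prefix is $\alpha'=\epsilon$, so I would check that $\rho_{\epsilon,\epsilon}$ --- the pairs $(\delta(s),\delta(st))$ with $st\in\mathcal{L}(S/G)$ and $s,t$ unobservable --- coincides with $\odot_{S(\epsilon)}(U\!R_{S(\epsilon)}(\{x_0\}))$, which is immediate: the first coordinate ranges over states reached by enabled unobservable strings from $x_0$, and the second is obtained from the first by a further enabled unobservable string.

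For the inductive step, consider $\alpha\sigma\in P(\mathcal{L}(S/G))$ with $\sigma\in\Sigma_o$ and set $\gamma=S(\alpha\sigma)$. Claim (i) is the routine current-state-estimate recursion $\hat{X}_{S/G}(\alpha\sigma)=U\!R_{\gamma}(N\!X_{\sigma}(\hat{X}_{S/G}(\alpha)))$, obtained by decomposing each string realizing the estimate at its last observable event and invoking the induction hypothesis $C(\mathcal{I}(\alpha))=\hat{X}_{S/G}(\alpha)$. The heart of the argument is claim (ii). Since the prefixes of $\alpha\sigma$ are exactly the prefixes of $\alpha$ together with $\alpha\sigma$ itself, and since the updating rule~\eqref{Updating} splits $D(\mathcal{I}(\alpha\sigma))$ into the images $\widetilde{U\!R}_{\gamma}(\widetilde{N\!X}_{\sigma}(\rho))$ over $\rho\in D(\mathcal{I}(\alpha))$ plus the single set $\odot_{\gamma}(C(\mathcal{I}(\alpha\sigma)))$, I would prove two matching identities. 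The first, $\odot_{\gamma}(C(\mathcal{I}(\alpha\sigma)))=\rho_{\alpha\sigma,\alpha\sigma}$, is the same computation as in the base case, now anchored at the current estimate furnished by claim (i). The second is the key lemma
\begin{equation}
\widetilde{U\!R}_{\gamma}(\widetilde{N\!X}_{\sigma}(\rho_{\alpha',\alpha}))=\rho_{\alpha',\alpha\sigma}\quad\text{for every } \alpha'\leq\alpha.\nonumber
\end{equation}

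I would establish this key lemma by double inclusion. For $\subseteq$, every pair produced from $(\delta(s),\delta(st))\in\rho_{\alpha',\alpha}$ has the form $(\delta(s),\delta(st\sigma w))$ with $w\in(\Sigma_{uo}\cap\gamma)^*$; one then checks $P(s)=\alpha'$ and $P(st\sigma w)=\alpha\sigma$, and --- using that control decisions depend only on the observation, so $\sigma\in S(P(st))=S(\alpha)$ forces $st\sigma\in\mathcal{L}(S/G)$ and the enabledness of $w$ under $\gamma=S(\alpha\sigma)$ then forces $st\sigma w\in\mathcal{L}(S/G)$ --- that this pair lands in $\rho_{\alpha',\alpha\sigma}$. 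For $\supseteq$, given $(\delta(u),\delta(uv))\in\rho_{\alpha',\alpha\sigma}$, I would decompose $uv$ at its last observable event, writing $uv=s\sigma w$ with $P(s)=\alpha$ and $w$ unobservable; because $\alpha'\leq\alpha$, the prefix $u$ cannot contain that final $\sigma$, so $u\leq s$, hence $(\delta(u),\delta(s))\in\rho_{\alpha',\alpha}$ and the pair is recovered by applying $\widetilde{N\!X}_{\sigma}$ and then $\widetilde{U\!R}_{\gamma}$ with witness $w$. Combining the two identities yields $D(\mathcal{I}(\alpha\sigma))=\{\rho_{\alpha',\alpha\sigma}:\alpha'\leq\alpha\sigma\}$, closing the induction.

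The step I expect to be the main obstacle is the $\supseteq$ direction of the key lemma: the bookkeeping that justifies $u\leq s$ (i.e.\ that the past-instant prefix never reaches beyond the newly observed event), and the verification that membership in $\mathcal{L}(S/G)$ is preserved when reattaching $\sigma$ and the unobservable tail $w$. The latter is precisely where the observation-based nature of the supervisor is essential: the operators $\widetilde{N\!X}_{\sigma}$ and $\widetilde{U\!R}_{\gamma}$ apply the transition function $\delta$ \emph{blindly}, so one must separately argue that $\sigma$ and the events of $w$ are genuinely enabled along the concrete string, which holds because every string with projection $\alpha$ (respectively $\alpha\sigma$) is governed by the same control decision $S(\alpha)$ (respectively $\gamma$). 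A minor point worth recording is that the operators act on sets of pairs, so the identity is independent of the index $\alpha'$; thus the set-valued description of $D(\mathcal{I}(\alpha))$ --- in which distinct prefixes may collapse to equal $\rho_{\alpha',\alpha}$ --- causes no difficulty when passing to the union.
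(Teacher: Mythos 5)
Your proposal is correct and follows essentially the same route as the paper's proof: induction on $|\alpha|$, the base-case identification of $\odot_{S(\epsilon)}(U\!R_{S(\epsilon)}(\{x_0\}))$ with $\rho_{\epsilon,\epsilon}$, and the inductive step built on the two identities $\widetilde{U\!R}_{\gamma}(\widetilde{N\!X}_{\sigma}(\rho_{\alpha',\alpha}))=\rho_{\alpha',\alpha\sigma}$ and $\odot_{\gamma}(\hat{X}_{S/G}(\alpha\sigma))=\rho_{\alpha\sigma,\alpha\sigma}$. The only differences are presentational: the paper dispatches part (i) by citing prior work on the $U\!R$/$N\!X$ recursion and writes the key identity as a chain of set equalities, whereas you prove (i) within the induction and spell out the double inclusion (your care about $u\leq s$ and about membership in $\mathcal{L}(S/G)$ being preserved is exactly what those equalities implicitly rely on).
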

    \begin{proof}
    The first component $C(\mathcal{I}(\alpha))$ is obtained by iteratively applying $U\!R$ and $N\!X$ operators,
    which gives (i)  according to \cite{yin2016uniform}.
    Next, we prove (ii) by induction on the length of $\alpha$.

    \noindent\emph{Induction Basis: }Suppose that $|\alpha\mid=0$, i.e., $\alpha=\epsilon$.
    Then we know that $\mathcal{I}(\epsilon)=\imath_0$.
    Let $\gamma=S(\epsilon)$ and we have 
    \begin{align}\label{D()}
    &D(\mathcal{I}(\epsilon))         \nonumber\\
    \!=\!&
    \{\odot_{\gamma}(U\!R_{\gamma}(\{x_0\}))\}       \nonumber\\
    \!=\!&
    \left\{  \left\{(x, \delta(x,w))\!\in\! X  \!\times\! X:
    \begin{gathered}
    x\!\in\! U\!R_{\gamma}(\{x_0\}),   w\! \in\!(\Sigma_{uo}\cap\gamma)^{*}
    \end{gathered}
    \right\}  \right\}            \nonumber\\
    \!=\!&
    \left\{\left\{(\delta(s),\delta(st))  \in X\times X:
    \begin{gathered}
    st\in(\Sigma_{uo}\cap\gamma)^{*}
    \end{gathered}
    \right\}\right\}                   \nonumber\\
    \!=\!&
    \left\{\left\{(\delta(s),\delta(st)) \in X\times X:
      st\!\in\!\mathcal{L}(S/G),   
    P(s)\!=\!P(st)\!=\!\epsilon 
    \right\}\right\} \nonumber\\
    \!=\!&
    \{\rho_{\epsilon,\epsilon}\}
    \end{align}
    That is, the induction basis holds.

    \noindent\emph{Induction Step: }
    Now, let us assume that (ii)  holds for $|\alpha\mid=k$.
    Then we want to prove the case of $\alpha\sigma$, where  $\sigma\in\Sigma_o\cap\gamma$.
    In the following equations, $\gamma=S(\alpha)$.
    By the updating rule in Equation~\eqref{Updating} and (i), we know that
    \begin{align}\label{DD}
    &D(\mathcal{I}(\alpha\sigma)) \\
    =&\{ \widetilde{U\!R}_{\gamma}(\widetilde{N\!X}_{\sigma}(\rho))\in 2^{X\times X}:\rho\in D(\mathcal{I}(\alpha))\} \nonumber\\
    &\cup \{ \odot_{\gamma}(\hat{X}_{S/G}(\alpha\sigma))\}  \nonumber
    \end{align}
    Since $|\alpha\mid=k$, by the induction hypothesis, we know
    \[
    D(\mathcal{I}(\alpha))=\{\rho_{\alpha',\alpha}\in 2^{X\times X}:\alpha'\leq\alpha \}
    \]
    Recall that
    \begin{align}
    \rho_{\alpha',\alpha}
    \!=\!&
    \left\{(\delta(s),\delta(st))\in X\times X:
    \begin{gathered}
      st\in \mathcal{L}(S/G), \\
    P(s)=\alpha', P(st)=\alpha
    \end{gathered}
    \right\}\nonumber
    \end{align}
    Therefore, we have
    \begin{equation}\label{D(1)}
    \begin{split}
    &\widetilde{U\!R}_{\gamma}(\widetilde{N\!X}_{\sigma}(\rho_{\alpha',\alpha}))\\
    =&
    \left\{(x_1,\delta(x_2,\sigma w)  ) \!\in\! X\!\times\! X \!:\!
    (x_1,x_2)\!\in\!\rho_{\alpha',\alpha},  w\!\in\!(\Sigma_{uo}\!\cap\!\gamma)^{*}
    \right\}\nonumber\\
    =&
    \left\{
    \begin{gathered}
    (\delta(s),\delta(st\sigma w)) \\
    \in   X\times X
    \end{gathered} :
    \begin{gathered}
     st\in\mathcal{L}(S/G),w\in(\Sigma_{uo}\cap\gamma)^{*},\\  
    P(s)=\alpha',  P(st)=\alpha
    \end{gathered}
    \right\}\nonumber\\
    =&
    \left\{(\delta(s),\delta(st'))\in X\times X:
    \begin{gathered}
      st'\in\mathcal{L}(S/G),    \\
    P(s)=\alpha',  P(st')=\alpha\sigma
    \end{gathered}
    \right\}\nonumber\\
    =&\rho_{\alpha',\alpha\sigma}
    \end{split}
    \end{equation}
    This further gives
    \begin{equation}\label{U}
    \begin{split}
    &\{ \widetilde{U\!R}_{\gamma}(\widetilde{N\!X}_{\sigma}(\rho))\in 2^{X\times X}:\rho\in D(\mathcal{I}(\alpha))\} \\
    =&\{ \widetilde{U\!R}_{\gamma}(\widetilde{N\!X}_{\sigma}(\rho_{\alpha',\alpha}))\in 2^{X\times X}:\alpha'<\alpha\sigma \}\\
    =&\{\rho_{\alpha',\alpha\sigma}\in 2^{X\times X}:\alpha'<\alpha\sigma \}
    \end{split}
    \end{equation}
    Moreover, we have
    \begin{align}\label{o}
    &\odot_{\gamma}(\hat{X}_{S/G}(\alpha\sigma))\\
    =&
    \left\{(x,\delta(x,w))\!\in\! X\!\times\! X :
    x\!\in\!\hat{X}_{S/G}(\alpha\sigma), w\!\in\!(\Sigma_{uo}\cap\gamma)^{*}
    \right\}\nonumber\\
    =&
    \left\{\!(\delta(s),\delta(sw))\!\in\! X\!\times\! X\!:\!
    \begin{gathered}
      s\!\in\!\mathcal{L}(S/G),\\
     w\!\in\!(\Sigma_{uo}\cap\gamma)^{*}, P(s)=\alpha\sigma
    \end{gathered}
    \right\}\nonumber\\
    =&
    \left\{\!(\delta(s),\delta(sw))\!\in\! X\!\times\! X\!:\!
    \begin{gathered}
      sw\in\mathcal{L}(S/G), \\
      P(s)=P(sw)=\alpha\sigma
    \end{gathered}
    \right\}\nonumber\\
    =&\rho_{\alpha\sigma,\alpha\sigma} \nonumber
    \end{align}
    Therefore, by combing Equation~\eqref{D(1)} and \eqref{o} we have
    \begin{align}\label{Y}
    D(\mathcal{I}(\alpha\sigma))=&\{\rho_{\alpha',\alpha\sigma}\in 2^{X\times X}:\alpha'<\alpha\sigma \}\cup\{\rho_{\alpha\sigma,\alpha\sigma}\}  \nonumber\\
    =&\{\rho_{\alpha',\alpha\sigma}\in 2^{X\times X}:\alpha'\leq \alpha\sigma\}
    \end{align}
    This completes the induction step, i.e., (ii) holds.
    \end{proof}

	Recall that each information-state is in the form of $\imath=(C(\imath),D(\imath))=2^X\times2^{2^{X\times X}}$,
	where $D(\imath)$ is a set of possible state-pairs.
	We define
	\begin{equation}\label{D1}
    D_1(\imath):=\{\{x\in X:(x,x')\in\rho\}:\rho\in D(\imath)\}
    \end{equation}
	as the set of its first components. 
	For example, for  $\mathcal{I}(o_2)=(\!\{6,\!8\},\!\{\{(0,6),\!(0,8),\!(4,6),\!(4,8),\!(5,6),\!(7,8)\},\!\{(6,6),
	\!(8,8)\}\!\}\!)$ in Example~\ref{exm4}, we have $D_1(\mathcal{I}(o_2))=\{\{0,4,5,7\},\{6,8\}\}$.
	The following result shows that $D_1(\imath)$ indeed captures all possible delayed state estimates.
	
	\begin{mycol}\label{cor:delay}
		Let $S$ be a supervisor, $\alpha\in P(\mathcal{L}(S/G))$ be an observable string and $\mathcal{I}(\alpha)$ be the information-state reached.
		Then we have
		\[
		D_1(\mathcal{I}(\alpha))=\{\hat{X}_{S/G}(\alpha' \mid \alpha)\in 2^X:\alpha'\leq\alpha \}.
		\]
	\end{mycol}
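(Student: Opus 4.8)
The plan is to derive this corollary directly from Proposition~\ref{prop:main}(ii) by applying the definition of $D_1$ in Equation~\eqref{D1} term by term. Since Proposition~\ref{prop:main}(ii) already characterizes $D(\mathcal{I}(\alpha))$ as exactly the collection $\{\rho_{\alpha',\alpha}:\alpha'\leq\alpha\}$ of state-pair sets indexed by the prefixes $\alpha'$ of $\alpha$, the remaining work is purely to project each such $\rho_{\alpha',\alpha}$ onto its first coordinate and to verify that the resulting set of states is precisely the delayed state estimate $\hat{X}_{S/G}(\alpha'\mid\alpha)$.

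First I would substitute the characterization of $D(\mathcal{I}(\alpha))$ into the definition of $D_1$, obtaining
\[
D_1(\mathcal{I}(\alpha)) = \{\{x\in X : (x,x')\in\rho_{\alpha',\alpha}\} : \alpha'\leq\alpha\}.
\]
Then, for each fixed prefix $\alpha'\leq\alpha$, I would unfold the definition of $\rho_{\alpha',\alpha}$: a state $x$ lies in the first-coordinate projection of $\rho_{\alpha',\alpha}$ if and only if there exist strings $s,t$ with $x=\delta(s)$, $st\in\mathcal{L}(S/G)$, $P(s)=\alpha'$ and $P(st)=\alpha$, where the witnessing second coordinate $x'=\delta(st)$ is absorbed by the existential quantifier over $t$. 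Comparing this with the definition of $\hat{X}_{S/G}(\alpha'\mid\alpha)$ shows that the two sets coincide, so the first-coordinate projection of $\rho_{\alpha',\alpha}$ equals $\hat{X}_{S/G}(\alpha'\mid\alpha)$.

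Substituting this identity back yields $D_1(\mathcal{I}(\alpha)) = \{\hat{X}_{S/G}(\alpha'\mid\alpha)\in 2^X : \alpha'\leq\alpha\}$, as claimed. I do not expect any genuine obstacle here: the entire substantive content is carried by Proposition~\ref{prop:main}(ii), and the only point requiring a moment's care is confirming that discarding the second coordinate of $\rho_{\alpha',\alpha}$ loses no first-coordinate state---that is, that the existential quantifier over the second component is harmless---which is immediate from the set-builder form shared by both definitions.
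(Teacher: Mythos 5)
Your proposal is correct and follows exactly the paper's own argument: invoke Proposition~\ref{prop:main}(ii), project each $\rho_{\alpha',\alpha}$ onto its first coordinate via the definition of $D_1$, and identify the result with $\hat{X}_{S/G}(\alpha'\mid\alpha)$ by unfolding the set-builder definitions. No gaps; the observation that the second coordinate is absorbed by the existential quantifier over $t$ is the only point of care, and you handle it correctly.
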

	\begin{proof}
	By Proposition~\ref{prop:main}, we know that $D(\mathcal{I}(\alpha))=\{\rho_{\alpha',\alpha}\in 2^{X\times X}:\alpha'\leq\alpha \}$.
   Therefore,
    \begin{equation}
    \begin{split}
    &D_1(\mathcal{I}(\alpha))\\
    =&
    \{\{x\in X:(x,x')\in\rho_{\alpha',\alpha}\}:\alpha'\leq\alpha \}\\
    =&
    \left\{\left\{\delta(s)\in X:
    \begin{gathered}
      st\in\mathcal{L}(S/G), \\
    P(s)=\alpha',P(st)=\alpha\}
    \end{gathered}\right\}
    :\alpha'\leq\alpha 
    \right\}\nonumber\\
    =&\{\hat{X}_{S/G}(\alpha'\mid\alpha)\in 2^X:\alpha'\leq\alpha \}
    \end{split}
    \end{equation}
    \end{proof}
	
	\section{Solving the Qualitative Synthesis Problem}\label{sec:5}
	
	\subsection{Bipartite Transition System}
	In Section~\ref{sec:4}, we have proposed a new type of information-state that can capture all possible delayed state estimates.
	Note that, the information-state updating rule as defined in Equation~\eqref{Updating} essentially consists of  two steps:
	the immediate observable reach when a new observable event occurs and the unobservable reach when a new control decision is issued.
	However, this is based on the assumption that supervisor $S$ is given.
	In the synthesis problem, the control decision at each instant is unknown and to be determined.
	Therefore, we need to separate these two updating steps clearly.
	To this end,
	we adopt  the generic structure of the \emph{bipartite transition system} (BTS) that was originally proposed in our previous work \cite{yin2016uniform} by incorporating the new information-state.
	
	\begin{mydef}
		A  bipartite transition system (BTS) $T$ w.r.t.\ $G$ is a 7-tuple.
		\begin{equation}\label{T}
		T=(Q_Y^T,Q_Z^T,h_{Y\!Z}^T,h_{ZY}^T,\Sigma_o,\Gamma,y_0^T),
		\end{equation}
		where
		\begin{itemize}
			\item
			$Q_Y^T\subseteq I$ is the set of $Y$-states.
			Therefore, a $Y$-state $y\in Q_Y^T$ is in the form of  $y=(C(y),D(y))$;
			\item
			$Q_Z^T\subseteq I\times\Gamma$ is the set of $Z$-states.
			For each $z\in Q_Z^T$,  $I(z)$ and $\Gamma(z)$ denote, respectively, the information state and the control decision, so that $z=(I(z),\Gamma(z))\!\in\!2^X\times 2^{2^{X\times X}}\times\Gamma$.
			For simplicity, we further write
			$z=(C(I(z)),D(I(z)), \Gamma(z))$ as $z=(C(z),D(z),\Gamma(z))$;
			\item
			$h_{Y\!Z}^T:Q_Y^T\times\Gamma\rightarrow Q_Z^T$ is the partial transition function from $Y$-states to $Z$-states satisfying the following constraint:  for any $h_{Y\!Z}^T(y,\gamma)=z$, we have
			\begin{equation}\label{hYZ}
			\left\{
			\begin{aligned}
			C(z)=& U\!R_{\gamma}(C(y)) \\
			D(z)=& \{ \widetilde{U\!R}_{\gamma}(\rho)\!\in\! 2^{X\times X}:\rho\!\in\! D(y)\} \cup \{ \odot_{\gamma}(C(z))\}  \\
			\Gamma(z)=& \gamma
			\end{aligned}
			\right.
			\end{equation}
			\item
			$h_{ZY}^T:Q_Z^T\times\Sigma\rightarrow Q_Y^T$ is the partial transition function from $Z$-states to $Y$-states satisfying the following constraint:  for any  $h_{ZY}^T(z,\sigma)=y$, we have $\sigma\in\Gamma(z)\cap\Sigma_o$ and
			\begin{equation}\label{hZY}
			\left\{
			\begin{aligned}
			C(y)=&N\!X_\sigma(C(z)) \\
			D(y)=&\{\widetilde{N\!X}_\sigma(\rho)\in 2^{X\times X}:\rho\in D(z)\}
			\end{aligned}
			\right.
			\end{equation}
			\item
			$\Sigma_o$ is the set of observable events of $G$;
			\item
			$\Gamma$ is the set of admissible control decisions of $G$;
			\item
			$y_0^T:=(\{x_0\},\{ \emptyset   \})\in Q_Y^T$ is the initial $Y$-state.
		\end{itemize}
	\end{mydef}
	
	The BTS  is essentially a game structure between the controller and the environment.
	When the controller picks a control decision $\gamma\in 2^\Sigma$ at a $Y$-state, the game moves to a $Z$-state.
	A transition from $Y$-state to $Z$-state is an unobservable reach under the issued control decision and remembers the control decision.
	When the environment picks an observation $\sigma\in \Sigma_o\cap\gamma$ at a $Z$-state, the game moves to a $Y$-state, and so forth. A transition from $Z$-state to $Y$-state is the observable reach.
	Transitions from $Z$-states to $Y$-states and transitions from $Y$-states to $Z$-states are indeed the information-state updating rule in Equation~\eqref{Updating}, but we separate the updating procedure into two parts.
	Specifically, for any $z\in Q_Z^T, y\in Q_Y^T,\gamma\in \Gamma,\sigma\in \Sigma_o$,
	we have $h_{Y\!Z}^T(h_{ZY}^T(z,\sigma),\gamma)= (\imath',\gamma)$, where
	$I(z)\xrightarrow[\quad]{(\sigma,\gamma)}\imath'$. 
	For simplicity, we write a transition as $h_{Y\!Z}$ whenever it is defined for some $h_{Y\!Z}^T$; the same for $h_{ZY}$. 
	The basic generic structure of the BTS was originally proposed in \cite{yin2016uniform}.
	Here we generalize the original BTS by using new information-states capturing delays rather than the current-state-type information-states   used in \cite{yin2016uniform}.
	
	\begin{figure*}[htbp]
		\centering
		\includegraphics[width=1\linewidth]{./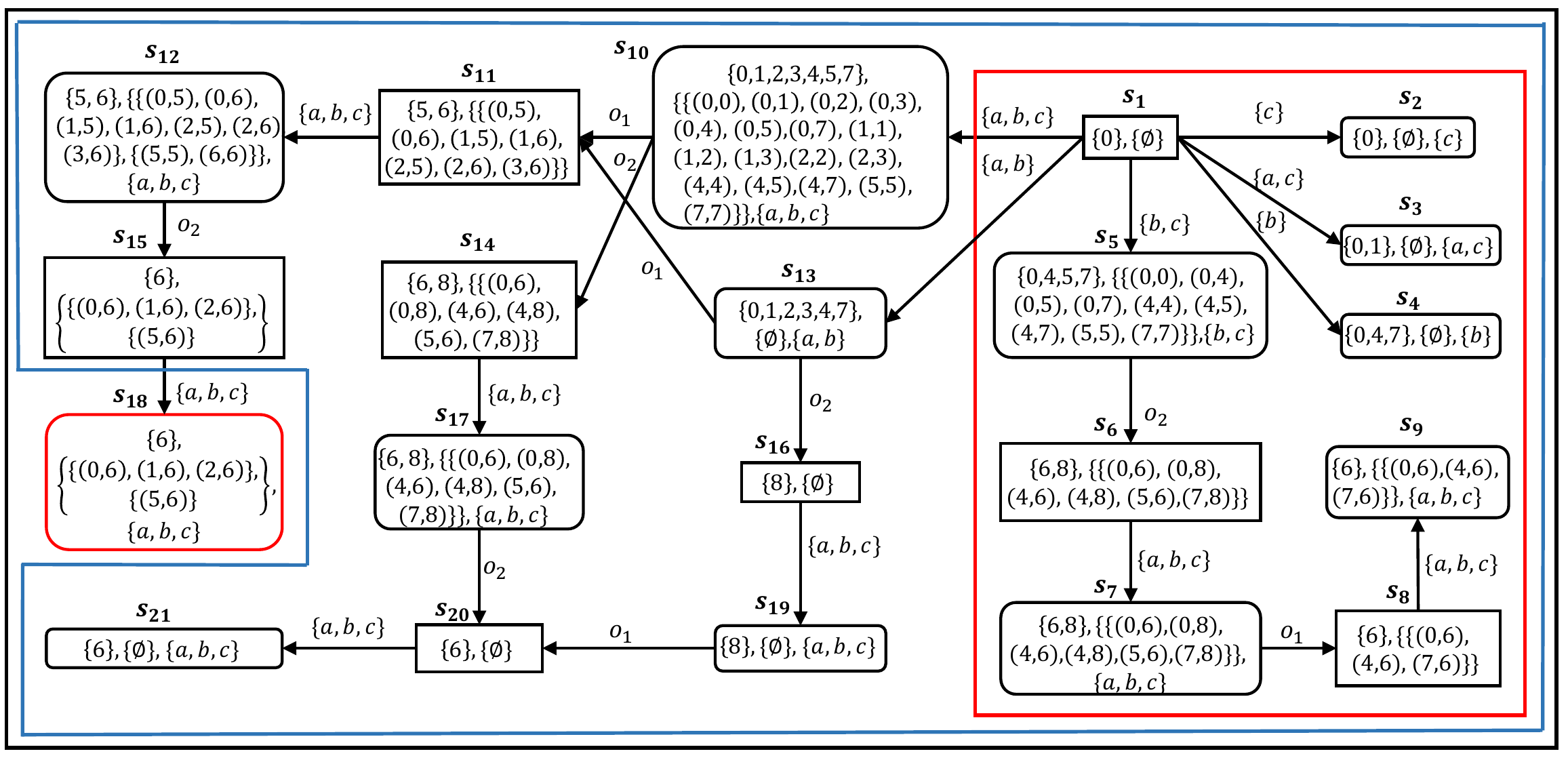}
		\caption{Example of the construction of the BTS.
			Rectangular states correspond to $Y$-states and rectangular with rounded corners states correspond to $Z$-states.
			$T_{total}$ is the entire system in the box marked with black lines, which is the largest BTS that enumerates all transitions, $T_0$ is the system in the box marked with blue lines and 
			$T^*$ is the system in the box marked with red lines.}
		\label{BTS}
	\end{figure*}

	\subsection{Supervisor Synthesis Algorithm} 
	
	By Equation~\eqref{eq:reform}  and Corollary~\ref{cor:delay},
	to make sure that the closed-loop system $S/G$ is infinite-step opaque,
	it suffices to guarantee that, for  any information-state $\imath$ reached, we have
	$\forall q\in D_1(\imath), q\nsubseteq X_S$.
	Therefore, we define
	\[
	Q_{rev}=\{z\in Q_Z:   q\in D_1( I(z)   ), q\subseteq X_S\}
	\]
	as the set of \emph{secret-revealing} $Z$-states.
	To synthesize a supervisor that enforces infinite-step opacity, the controlled system $S/G$ needs to guarantee that all  reachable information-states are not secret-revealing.
	Furthermore, as the supervisor can only play at $Y$-states, we need to make sure that
	(i) there is at least one choice at each $Y$-state; and (ii) all choices at each $Z$-state should be considered.
	Therefore, for each BTS $T$, we say that a state is \emph{consistent} if
	\begin{itemize}
		\item
		at least one transition is defined when it  is a $Y$-state;
		\item
		all feasible observations are defined when it is a $Z$-state.
	\end{itemize}
	Let $\mathcal{T}$ be the set of all  BTSs.
	For any  BTS $T\in \mathcal{T}$, we define $Q^T_{consist}$ as the set of consistent states in $T$.
	Also,  for a  set of states $Q\subseteq Q_Y^T\cup Q_Z^T$,
	we define $T|_{Q}$ as the restriction of $T$ to $Q$,
	i.e., $T|_{Q}$ is the BTS obtained by removing states not in $Q$ and their associated transitions  from $T$.
	
	In order to synthesize an opacity-enforcing supervisor,
	first, we construct the largest BTS that enumerates all possible transitions for each state,
	i.e., a transition is defined whenever it satisfies the constraints in $h_{Y\!Z}$ or $h_{ZY}$.
	We define $T_{total}\in\mathcal{T}$ as the largest BTS that includes all possible transitions.
	Second, we remove all secret-revealing states from $T_{total}$ and define
	\[
	T_0=T_{total}|_{Q\setminus Q_{rev}}.
	\]
	Then, we need to solve a \emph{safety game} by iteratively removing inconsistent states from $T$ until the BTS is consistent.
	Specifically, we define an operator
	\[
	F:\mathcal{T}\to\mathcal{T}
	\]
	by: for any $T$, we have $F(T)=T|_{Q^T_{consist}}$.
	Note that $T|_{Q^T_{consist}}$ need not be consistent in general since removing inconsistent states may
	create new inconsistent states.
	Note that, iterating operator $F$ always converges in a finite number of steps since we need to remove at least one state for each iteration
	and there are only finite number of states in $T$.
	We define
	\[
	T^*=\lim_{k\to\infty}F^k(T_0), \text{ where }T_0=T_{total}|_{Q\setminus Q_{rev}}
	\]
	as the resulting BTS after the convergence of operator $F$.
	
	\begin{remark}
	(Simplification of  Information-States) \upshape  
	In the  information-state updating rule, at each instant, 
	the current state estimate is added to the second component.
	However, 
	if a current state estimate does not even contain a secret state,  
	it is impossible to   infer that the system was at a secret state no matter what is observed in the future. 
	For such a scenario, adding the current state estimate is irrelevant for us to determine $Q_{rev}$. 
	Therefore,   if $C(\imath')\cap X_S=\emptyset$, then the second part of the information-state updating rule in Equation~\eqref{Updating} can be further simplified as
	\begin{equation} 
	D(\imath')
     = \{ \widetilde{U\!R}_{\gamma}(\widetilde{N\!X}_{\sigma}(\rho))\in 2^{X\times X}:\rho\in D(\imath)\}
	\end{equation}
	This simplification does not affect any results of the proposed approach but can reduce the space-state significantly when the number of secret states is relatively small. 
	For the sake of simplicity, we will use this simplified rule in the example by omitting current state estimate in the second component when no secret state is involved. 
	\end{remark}
	
	Before we proceed further, we illustrate the above procedure by the following example.

	\begin{myexm}\upshape
		Let us still consider system $G$ in Fig.~\ref{SystemG}.
		Then $T^*$ in Fig.~\ref{BTS} is a BTS.
		For the sake of simplicity, uncontrollable events $o_1$ and $o_2$ are omitted in each control decision  in Fig.~\ref{BTS}.
		Also, we follow the simplification in the above remark  by  omitting secret-irrelevant information in the second part of $Z$-states.
		At the initial $Y$-state $s_1=(\{0\},\{\emptyset\})$, the supervisor makes control decision $\gamma=\{b,c\}$, we reach $Z$-state $s_5=h_{Y\!Z}(s_1,\gamma)=(\!\{0,\!4,\!5,\!7\}, \{\!\{\!(0,0),\!(0,4),\!(0,5),\!(0,7),$ $\!(4,4),\!(4,5),\!(4,7),
		\!(5,5),\!(7,7)\}\!\},\!\{b,c\}\!)$.
		From $s_5$, the occurrence of observable event $o_2$ leads to the next $Y$-state $s_6=h_{ZY}(s_5,o_2)=(\!\{6,\!8\},\{\!\{\!(0,6),\!(0,8),\!(4,6),\!(4,8),\!(5,6),$ $\!(7,8)\}\}\!)$.
		From $Y$-state $s_6$,
		by making control decision $\gamma=\{a,b,c\}$, 
		the system will reach the next $Z$-state $s_7=h_{Y\!Z}(s_6,\gamma)=(\!\{6,\!8\},\{\!\{\!(0,6),\!(0,8),\!(4,6),\!(4,8),\!(5,6),\!(7,8)\},$ $\!\{(6,6),\!(8,8)\}\!\},\{a,b,c\})$ according the original updating rule in Equation~\eqref{Updating}.
		Note that since the current state estimate is $C(\imath')=\{6,8\}$ in which there is no secret state, 
		we can apply the simplification rule without adding $ \{ \odot_{\gamma}(C(\imath'))\}$. 
		This is why we have 
		 $D(s_7)\!=\!\{\!\{\!(0,6),\!(0,8),\!(4,6),\!(4,8),\!(5,6),\!(7,8)\}\}$ in Fig.~\ref{BTS}. 
	    From $Z$-state $s_7$, observable event $o_1$ occurs and the system will reach the next $Y$-state $s_8=h_{ZY}(s_7,o_1)=(\{6\},\{\{(0,6),(4,6),(7,6)\}\})$.
		Then supervisor makes control decision $\gamma=\{a,b,c\}$ and system will reach to $Z$-state $s_9=h_{Y\!Z}(s_8,\gamma)=(\{6\},\{\{(0,6),(4,6),(7,6)\}\},\{a,b,c\})$. 
		Again, $ \{ \odot_{\gamma}( \{6\}  )\}$ is not added as no secret state is involved. 

		In fact, $T^*$ is obtained as follows. 
		First, we construct the largest BTS that enumerates all possible transitions for each state, which is $T_{total}$ shown in the entire box marked with black lines in Fig.~\ref{BTS}.
		Note that $Q_{rev}=\{s_{18}\}$ since  $D_1(s_{18})=\{\{0,1,2\},\{5\}\}$ and $\{5\}\subseteq X_S$. 
		Therefore  $T_0$ is obtained by restricting $T_{total}$ to $Q\setminus Q_{rev}$, 
		which is shown in the box marked with blue-dashed lines in Fig.~\ref{BTS}.
		Then, we need to remove all inconsistent states from $T_0$.
		Since $s_{18}$ has been removed, $s_{15}$ becomes an inconsistent $Z$-state,
		which should be removed by applying operator $F$ for the first time.
		Then $s_{12}$ becomes a new inconsistent $Z$-state as feasible observable event $o_1$ is not defined;
		hence $s_{12}$ is deleted when applying operator $F$ for the second time.
		We keep applying operator $F$ and need to remove state $s_{11}$.
		This makes $s_{10}$ and $s_{13}$  inconsistent because feasible observable event $o_1$ is not defined; operator $F$ will further remove them.
		Then  operator $F$ converges to $T^*$.
	\end{myexm}
	
	Next, we show that synthesizing a supervisor within   $T^*$ is without loss of generality.
	\begin{mythm}
		The opacity-enforcing synthesis problem has no solution if $T^*$ is empty.
	\end{mythm}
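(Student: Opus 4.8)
The plan is to prove the contrapositive: if Problem~\ref{problem} admits a solution, then $T^*$ is nonempty. Concretely, I would take any supervisor $S$ whose closed-loop $S/G$ is infinite-step opaque, show that the set of BTS states visited by $S$ forms a \emph{consistent} sub-BTS of $T_0$ that survives every application of the operator $F$, and observe that this sub-BTS necessarily contains the initial $Y$-state $y_0^T$; hence $T^* = \lim_k F^k(T_0)$ cannot be empty.

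First I would make precise the run of $S$ inside $T_{total}$. Starting from $y_0^T$, each control decision $S(\alpha)$ fires an $h_{Y\!Z}$ transition and each observed event fires an $h_{ZY}$ transition, so that every $\alpha \in P(\mathcal{L}(S/G))$ traces a unique alternating path whose $Y$- and $Z$-states carry exactly the information component $\mathcal{I}(\alpha)$ of Equation~\eqref{In-Evo1}; this is the content of Proposition~\ref{prop:main}, since $h_{Y\!Z}$ and $h_{ZY}$ together implement the updating rule of Equation~\eqref{Updating}. Let $\mathcal{R}_S$ denote the set of all states appearing on these paths, and note $y_0^T \in \mathcal{R}_S$.

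Second I would argue $\mathcal{R}_S \subseteq Q \setminus Q_{rev}$, i.e.\ no visited $Z$-state is secret-revealing. By Corollary~\ref{cor:delay} we have $D_1(\mathcal{I}(\alpha)) = \{\hat{X}_{S/G}(\alpha'\mid\alpha): \alpha'\le\alpha\}$, and by the reformulation in Equation~\eqref{eq:reform} infinite-step opacity of $S/G$ is equivalent to $\hat{X}_{S/G}(\alpha'\mid\alpha)\nsubseteq X_S$ for all $\alpha'\le\alpha\in P(\mathcal{L}(S/G))$. Hence for every visited information-state $\imath$, every $q\in D_1(\imath)$ satisfies $q\nsubseteq X_S$, which is precisely the negation of the defining condition of $Q_{rev}$. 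Therefore $\mathcal{R}_S$ lies inside $T_0 = T_{total}|_{Q\setminus Q_{rev}}$.

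Third, and this is where the real work lies, I would show that $T_0|_{\mathcal{R}_S}$ is consistent and that this consistency is preserved under $F$, so that no state of $\mathcal{R}_S$ is ever deleted. Consistency is immediate: at each visited $Y$-state the single decision $S(\alpha)$ supplies an outgoing $h_{Y\!Z}$ transition into $\mathcal{R}_S$, and at each visited $Z$-state every feasible observation $\sigma\in\Gamma(z)\cap\Sigma_o$ (one with $N\!X_\sigma(C(z))\neq\emptyset$) actually occurs in $S/G$, so $S(\alpha\sigma)$ is defined and the corresponding $h_{ZY}$ transition again lands in $\mathcal{R}_S$. The main obstacle is the fixpoint argument, which I would settle by induction on $k$, proving $\mathcal{R}_S \subseteq F^k(T_0)$. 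The base case is Step~two. For the inductive step, the crucial point is that every transition witnessing the consistency of a state in $\mathcal{R}_S$ targets another state of $\mathcal{R}_S$, hence by hypothesis a state of $F^k(T_0)$; therefore each state of $\mathcal{R}_S$ remains consistent \emph{within} $F^k(T_0)$ and is retained in $F^{k+1}(T_0)=F^k(T_0)|_{Q^{F^k(T_0)}_{consist}}$. Passing to the limit gives $\mathcal{R}_S\subseteq T^*$, and since $y_0^T\in\mathcal{R}_S$ we conclude $T^*\neq\emptyset$. This contradicts the hypothesis that $T^*$ is empty and establishes the contrapositive, completing the proof.
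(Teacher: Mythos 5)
Your proposal is correct and follows essentially the same route as the paper's own (sketched) proof: argue the contrapositive by showing that the states reached under any opacity-enforcing supervisor $S$ avoid $Q_{rev}$ and remain consistent, hence survive every iteration of $F$, so $T^*$ contains $y_0^T$ and is nonempty. Your write-up simply supplies the details (the induction on $k$ and the closure of the reachable set under the witnessing transitions) that the paper leaves implicit.
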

	\begin{proof}  
	(Sketch)
    Suppose that the synthesis problem has a solution $S$.
    Then we know that any information-states reached in $S$ are not in $Q_{rev}$.
    Moreover, for any $Y$ or $Z$-states reached by supervisor $S$ are consistent as $S$ can correctly choose a transition.
    Therefore, all information states reached under $S$ should not be removed during the iteration of operator $F$.
   	Hence, $T^*$ should not be empty if the synthesis  problem has a solution.
    \end{proof}
	
	Then for any $Y$-state $y$ in $T^*$, we define
	\[
	Dec_{T^*}(y)=\{   \gamma\in\Gamma:h_{Y\!Z}^{T^*}(y,\gamma)!        \}
	\]  as the set of control decisions defined at $y $ in $T^*$. 
	Clearly, 
	$\langle Dec_{T^*}(y),\subseteq \rangle$ forms a finite poset, which contains at least one   maximal element $\gamma$ such that  $\forall  \gamma'\in Dec_{T^*}(y): \gamma \not\subset \gamma'$. 
	When $T^*$ is not empty, we can synthesize a supervisor $S^*$ as follows.
	At each instant, the supervisor $S^*$ will remember the current information-state $y$ ($Y$-state)  and
	pick a \emph{maximal} control decision $\gamma$ from $Dec_{T^*}(y)$. \footnote{Some decisions are ``equivalent" at a $Y$-state in the sense that some events in the control decision may not be feasible within the unobservable reach.
	For example, decisions $\{a \}$  and $\{a,c\}$ are equivalent at $s_1$ in Fig.~\ref{BTS} as event $c$ is not feasible.
	For those equivalent decisions, we  only draw the one with all redundant event included in the figure.
	However, when   comparing   decisions to find  local maximal decisions, those redundant events should not be counted.}
	Note that maximal control decision is not unique in general and we denote by $Dec_{T^*}^{max}(y)\subseteq Dec_{T^*}(y)$ the set of maximal control decisions at $y$ in $T^*$.
	Then we update the information-state based on the control decision issued and wait for the next observable event and so forth.
	The execution of $S^*$ is formally described as Algorithm~1.
	\begin{algorithm}
		\SetKwData{Left}{left}
		\SetKwData{Up}{up}
		\SetKwFunction{FindCompress}{FindCompress}
		\SetKwInOut{Input}{input}
		\SetKwInOut{Output}{output}
		\nl $y\gets \{\{x_0\},\{{\emptyset}\}\}$\;
		\nl find a  maximal control decision $\gamma$ from $Dec_{T^*}^{max}(y)$\;
		\nl make initial control decision $\gamma$\;
		\nl $z\gets h_{Y\!Z}^{T^*} (y,\gamma)$\;
		
		\nl \While{new event $\sigma\in \gamma\cap \Sigma_o$ is observed}
		{
			\nl $y\gets h_{ZY}^{T^*} (z,\sigma)$\;
			\nl find a  maximal control decision $\gamma$ from $Dec_{T^*}^{max}(y)$\;
			\nl update the control decision to $\gamma$\;
			\nl $z\gets h_{Y\!Z}^{T^*} (y,\gamma)$\;
		}
		\label{algorithm1}
		\caption{Execution of Qualitative Supervisor $S^*$}
	\end{algorithm}
	
	Finally, we show that the proposed supervisor $S^*$ indeed solves Problem~\ref{problem}.
	
	\begin{mythm}\label{thm2}
		Supervisor $S^*$ defined by Algorithm 1 enforces infinite-step opacity and is maximally permissive.
	\end{mythm}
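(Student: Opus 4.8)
The plan is to prove the two requirements of Problem~\ref{problem} separately, treating \emph{soundness} (that $S^*$ enforces infinite-step opacity) and \emph{maximal permissiveness} as distinct arguments, with the real difficulty isolated in a completeness lemma stating that every admissible opacity-enforcing supervisor must ``live inside'' $T^*$. For soundness, I would first check that $S^*$ is well defined along every observation it generates: at each $Y$-state $y$ visited in $T^*$, consistency guarantees at least one outgoing transition, so the finite poset $\langle Dec_{T^*}(y),\subseteq\rangle$ has a maximal element and Algorithm~1 can pick $\gamma\in Dec_{T^*}^{max}(y)$; the run then reaches $z=h_{Y\!Z}^{T^*}(y,\gamma)$, whose consistency guarantees that every feasible $\sigma\in\Gamma(z)\cap\Sigma_o$ is defined, so $S^*$ can react to any continuation. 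Since $h_{Y\!Z}$ and $h_{ZY}$ jointly realize exactly the updating rule of Equation~\eqref{Updating}, the information-states reached by $S^*/G$ coincide with the $Y$- and $Z$-states traversed in $T^*$. Because $T_0$ deletes all of $Q_{rev}$ and $F$ only removes more states, no reachable $Z$-state lies in $Q_{rev}$; hence for every reached $\mathcal{I}(\alpha)$ we have $q\nsubseteq X_S$ for all $q\in D_1(\mathcal{I}(\alpha))$. Identifying $D_1(\mathcal{I}(\alpha))$ with $\{\hat{X}_{S^*/G}(\alpha'\mid\alpha):\alpha'\leq\alpha\}$ via Corollary~\ref{cor:delay} and then applying the reformulation in Equation~\eqref{eq:reform} yields infinite-step opacity.

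The step I expect to be the main obstacle is the following completeness lemma, which I would establish before attacking permissiveness: \emph{for every admissible supervisor $S'$ enforcing infinite-step opacity and every $\alpha\in P(\mathcal{L}(S'/G))$, the $Y$-state reached by $S'$ belongs to $Q_Y^{T^*}$ and $S'(\alpha)$ is a defined decision there.} Its proof mirrors the sketch of the preceding (unlabeled) theorem. By Corollary~\ref{cor:delay}, opacity of $S'$ forces every information-state it reaches to avoid $Q_{rev}$, so all these states lie in $T_0$. Then I would show by induction on the number of applications of $F$ that none of them is ever deleted: at each reached $Y$-state $S'$ supplies a defining decision, and at each reached $Z$-state $S'$ must respond to every feasible observation, so each such state remains consistent relative to the set of states reachable under $S'$, which is contained in $F^k(T_0)$ at every stage. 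Hence all of $S'$'s states survive into $T^*=\lim_k F^k(T_0)$.

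Finally I would prove maximal permissiveness by contradiction. Suppose some opacity-enforcing $S'$ satisfies $\mathcal{L}(S^*/G)\subsetneq\mathcal{L}(S'/G)$, and choose a shortest string $w\sigma\in\mathcal{L}(S'/G)\setminus\mathcal{L}(S^*/G)$ with $w\in\mathcal{L}(S^*/G)$. Minimality together with $\mathcal{L}(S^*/G)\subseteq\mathcal{L}(S'/G)$ forces the two closed-loop languages to enable the same feasible events along every proper prefix of $w$; since the updating rule depends only on feasibly enabled events, both supervisors reach the same $Y$-state $y_0$ after observing $\alpha_0:=P(w)$ (this $Y$-state is fixed by the shared behaviour \emph{preceding} the decision issued at $\alpha_0$). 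At $y_0$, Algorithm~1 selects a maximal $\gamma^*=S^*(\alpha_0)\in Dec_{T^*}^{max}(y_0)$, while $S'$ issues some $\gamma'=S'(\alpha_0)$. Language containment forces $\gamma'$ to feasibly enable everything $\gamma^*$ does, so $\gamma^*\subseteq\gamma'$ on feasible events, whereas $\sigma\in\gamma'\setminus\gamma^*$ gives $\gamma^*\subsetneq\gamma'$. By the completeness lemma $\gamma'\in Dec_{T^*}(y_0)$, so $\gamma^*$ is strictly dominated inside $Dec_{T^*}(y_0)$, contradicting its maximality. Hence no strictly larger opacity-enforcing language exists and $S^*$ is maximally permissive.

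The two delicate points I would take care to handle rigorously are the ``equivalence'' of control decisions that differ only on infeasible events (flagged in the footnote), which is why the comparisons above must be stated modulo feasibly enabled events, and the verification that identical feasibly-enabled behaviour along $w$ really produces identical information-states and thus the same $y_0$; the latter is where the dependence of $\hat{X}_{S/G}$ on the \emph{entire} closed-loop language, rather than on the open-loop plant, must be tracked explicitly.
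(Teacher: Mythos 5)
Your proposal is correct and follows essentially the same route as the paper: soundness via Corollary~\ref{cor:delay} together with the removal of $Q_{rev}$, and maximality by contradiction at the first point of divergence, using the fact that any opacity-enforcing $S'$'s decision survives the iteration of $F$ and hence lies in $Dec_{T^*}(y)$, contradicting maximality of $S^*$'s choice. Your explicit completeness lemma and the care about decisions that differ only on infeasible events are refinements of details the paper handles implicitly (via the sketch of the preceding theorem and the footnote), not a different argument.
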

	\begin{proof} 
    Let $\alpha\in P(\mathcal{L}(S^*/G))$ be any observable string in the closed-loop system $S^*/G$.
    By Corollary~\ref{cor:delay}, we know that $D_1(\mathcal{I}_{S^*/G}(\alpha))=\{\hat{X}_{S^*/G}(\alpha'\mid\alpha):\alpha'\leq\alpha \}$.
    Since all secret-revealing states have been removed from $T_{total}$, all information-states in $T^*$ are safe, which implies that $\hat{X}_{S^*/G}(\alpha'\mid\alpha)\not\subseteq X_s$.
    This means that  $S^*$ enforces infinite-step opacity.

    Next, we show that $S^*$ is maximal.
    Assume that there exists another supervisor $S'$ such that $\mathcal{L}(S^*/G)\subset\mathcal{L}(S'/G)$.
    This means that there exists a string $w\in\mathcal{L}(S^*/G)\subset\mathcal{L}(S'/G)$ such that    $S^*(w)\subset S'(w)$ and $S^*(w')=S'(w')$, $\forall w'< w$.
    Then $Y$-state reached upon the occurrence of string $w$ under supervisor $S'$ and $S^*$ are the same, which is denoted by $y$.
    Since $S'$ is a solution to Problem~\ref{problem}, its control decision $S'(w)$ should not be removed at $y$ during the iteration. This means that   $S'(w) \in Dec_{T^*}(y)$.
    However, it contradicts to our choice that $S^*(w)$ is in $Dec_{T^*}^{max}(y)$.
    Hence no such $S'$ exists.
    \end{proof}
	
    \begin{figure}
    \centering
    \subfigure[Solution $S_1$]{\label{Solution1}
    \includegraphics[width=0.5\textwidth]{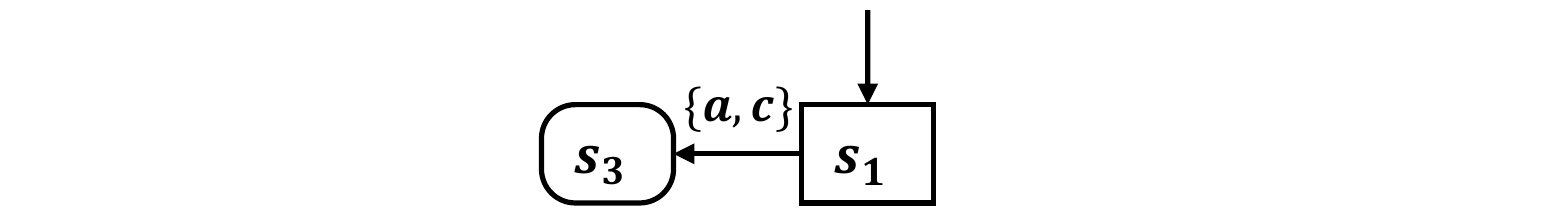}} 
    \subfigure[Solution $S_2$]{\label{Solution2}
    \includegraphics[width=0.5\textwidth]{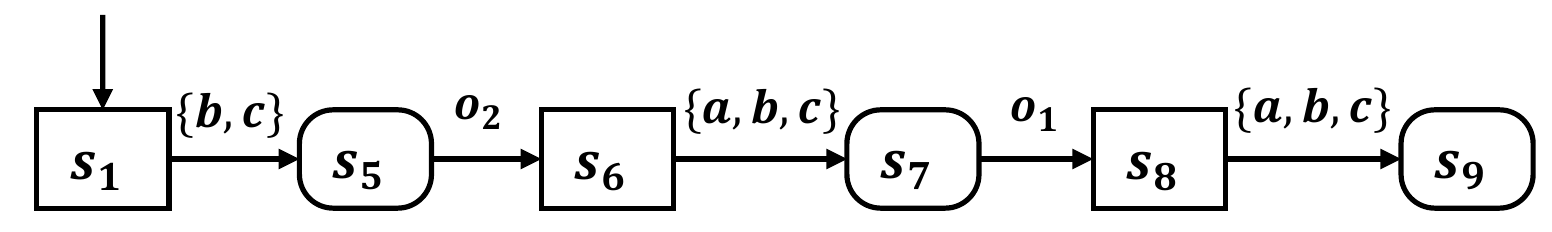}}
    \caption{Solution $S_1$ and $S_2$}
	\label{Solution}
    \end{figure}
	
	\begin{myexm}\upshape
	Again we consider system $G$ in  Fig.~\ref{SystemG} and we use Algorithm~1 to solve Problem 1.
		At the initial $Y$-state $y_0=(\{x_0\},\{\emptyset\})$,
		we have $Dec_{T^*}^{max}(y_0)=\{\{a,c\}, \{b,c\}\}$.
		If we choose control decision $\{a,c\}$, then we reach $Z$-state $s_3$ and the system has no future observation.
		This gives supervisor $S_{1}$ shown in Fig.~\ref{Solution1}, which  results the language generated by $G_1$ in Fig.~\ref{SystemG1}. 
		
		On the other hand, if we choose control decision $\{b,c\}$ initially,
		then the system moves to $s_5$.
		From $s_5$, the occurrence of $o_2$ leads to $s_6$ and the supervisor picks $\{a,b,c\}$ as the maximal control decision and the system moves to $s_7$.
		Then observable events $o_1$ occurs and the system moves to $s_8$, where supervisor $S_{2}$ picks $\{a,b,c\}$ leading the system to state $s_9$.
		This gives supervisor $S_{2}$ shown in    Fig.~\ref{Solution2}, which results in the language generated by $G_1$ in  Fig.~\ref{SystemG2}.
		As we have discussed in Example~\ref{exm:111}, both supervisors are maximal and they are incomparable.
	\end{myexm}
	
	\section{Quantifying  Secret-Revelation-Time}\label{sec:6}
	So far, we have solved a qualitative version of privacy-enforcing control problem by requiring that the closed-loop system under control is infinite-step opaque. 
	In some cases, such a binary requirement may be too strong  as the secret may be revealed inevitably after some delays no matter what control policy is taken, i.e., infinite opacity is not enforcable.

	In most of the applications, however, the importance of secret  will decrease as time goes on. 
	Then for the scenario where infinite-step opacity is not enforcable, it makes senses to consider an optimal synthesis problem by maximizing the \emph{secret-revelation-time} for each visit of secret states. 
	In the remaining part of the paper, we will implement this idea by further generalizing the qualitative synthesis problem to a quantitative version by ensuring the secret be revealed \emph{as late as possible}. 
	 
	\subsection{Problem Formulation of  Quantitative Synthesis Problem}  
	Let $S$ be a supervisor, 
	$\alpha\in P(\mathcal{L}(S/G))$ be an observation and $\alpha'\leq \alpha$ be a prefix of $\alpha$. 
	We define
	\[
	\textsc{Rev}(\alpha', \alpha)
	=\{\beta \leq \alpha :\hat{X}_{S/G}(\alpha'\mid \beta)\subseteq X_S \} 
	\]
	as the set of observations that are of prefixes of $\alpha$ and suffixes of $\alpha'$, upon which the visit of secret state at instant $\alpha'$ is revealed. 
	If secret state is visited at instant $\alpha'$ and it is revealed when $\alpha$ is executed, 
	then we have $\textsc{Rev}(\alpha', \alpha)\neq \emptyset$.  
	Note that once we know that the system was at secret state at instant $\alpha'$, we know this forever.  
	Therefore, we are interested in the \emph{first instant} when the secret is revealed, i.e., 
	the shortest string	$ \beta_{short}  \leq \alpha$  in  $\textsc{Rev}(\alpha', \alpha) $. 
	Then $|\beta_{short}|-|\alpha'|$ is referred to as the \emph{secret-revelation-time} for the instant $\alpha'$ upon $\alpha$.

	To quantify when the visit of a  secret state is revealed, 
	we consider a cost function 
	\[
	C: \mathbb{N}\rightarrow \{0,1,\dots,N_{max}\}
	\]
	that assigns each secret-revelation-time to a non-negative integer cost. 
	We assume that the cost function $C$ is monotonically non-increasing,  
	i.e., the importance of secret decreases as time goes on and it is more desirable to reveal the secret as late as possible.

	To avoid counting the revelation of each visit of secret state duplicatively, we define the cost incurred for $\alpha'$ upon $\alpha$ as the  cost incurred at the \emph{first} secret-revelation-instant  as we assume the cost function is non-increasing, 
	i.e.,  
	\begin{equation}
	    \texttt{Cost}(\alpha', \alpha)\!=\!\!\! 
	    \left\{\!
	    \begin{aligned}
		&0 &\text{if }& \textsc{Rev}(\alpha', \alpha)\!=\!\emptyset\\
		&  \max_{\beta\in \textsc{Rev}(\alpha', \alpha)} C(|\beta|-|\alpha'\mid)
		&\text{if }& \textsc{Rev}(\alpha', \alpha)\!\neq\!\emptyset
		\end{aligned}
		\right.\nonumber 
	\end{equation}	 
	Note that,  for string $st\in \mathcal{L}(S/G)$, if   $\delta(s)\notin X_S$, then we always have 
	$\texttt{Cost}(P(s),P(st))=0$.  
	
	Finally, we define the total cost incurred along an observation. 
  	Note that there may have multiple visits of secret states along string $\alpha$ at different instants. 
  	Therefore, we define the  \emph{total cost} incurred upon $\alpha\in P(\mathcal{L}(S/G))$ 
  	as the summation of the costs for all visits of secret states and their associated secret-revelation-times, i.e., 
	\[
	\texttt{Cost}(\alpha)=\sum_{\alpha'\leq\alpha } \texttt{Cost}(\alpha', \alpha).
	\]
	Then the cost of the closed-loop system $S/G$  is defined as the worst-case cost among all possible strings, i.e., 
    \[
	\texttt{Cost}(S/G)=\max\limits_{\alpha\in P(\mathcal{L}(S/G))} \texttt{Cost}(\alpha).
	\]
	Then the quantitative privacy-enforcing synthesis problem is then formulated as follows. 
	\begin{myprob}(Quantitative Privacy-Enforcing Control Problem)\label{problem2}
	Given a system $G$ and a set of secret states $X_S\subseteq X$, 
	determine whether or not there exists  a supervisor $S:P(\mathcal{L}(G))\rightarrow\Gamma$ such that its cost is finite. If so,   synthesize an optimal supervisor $S$ such that 
	\begin{enumerate}[(i)]
	\item
	for any  $S'$,  we have
	$\texttt{Cost}(S/G)\leq \texttt{Cost}(S'/G)$; and 
	\item 
	for any $S'$ such that $\texttt{Cost}(S/G)= \texttt{Cost}(S'/G)$, 
	we have    $\mathcal{L}(S/G)\not\subset\mathcal{L}(S'/G)$.
	\end{enumerate}
	\end{myprob}
	
	\begin{remark}\upshape
	The quantitative formulation   using cost function generalizes the notion of infinite-step opacity investigated above, 
	as well as the notions of  current-state opacity \cite{wu2013comparative} and $K$-step opacity \cite{saboori2011verification} in the literature, which requires  that the visit of a secret state should not be revealed currently and in the next $K$ steps, respectively. 
	Specifically, for infinite-step opacity, it suffices to consider  cost function $C_{Inf}$ defined by 
	\[\forall k\geq 0: C_{Inf}(k)=\infty.\] 
	For $K$-step opacity, it suffices to consider   cost function $C_{Kst}$ defined by 
	\[\forall k\leq  K: C_{Kst}(k)=\infty \text{ and }\forall k>  K: C_{Kst}(k)=0,\]
	and current-state opacity is nothing but $0$-step opacity. 
	These existing notions are essentially binary, while our new formulation allows to investigate the effect of secret revelation delays more quantitatively. 
	\end{remark}
	
	\begin{remark} \upshape
	For the sake of simplicity, hereafter, we consider a cost function $C$ in a simple specific form of
	\[
	C(k)=\max\{N_{max}-k,0\}, 
	\]
	where $N_{max}$ is a finite value. 
	That is,  $C(k)$  is the cost incurred if the visit of a state state is revealed for the first time after $k$ steps.
	Our approach, in principal, can be applied to any form of cost function as long as it decreases to zero in a finite number of steps.
	\end{remark}

	\begin{figure}
    \centering
    \includegraphics[width=0.3\textwidth]{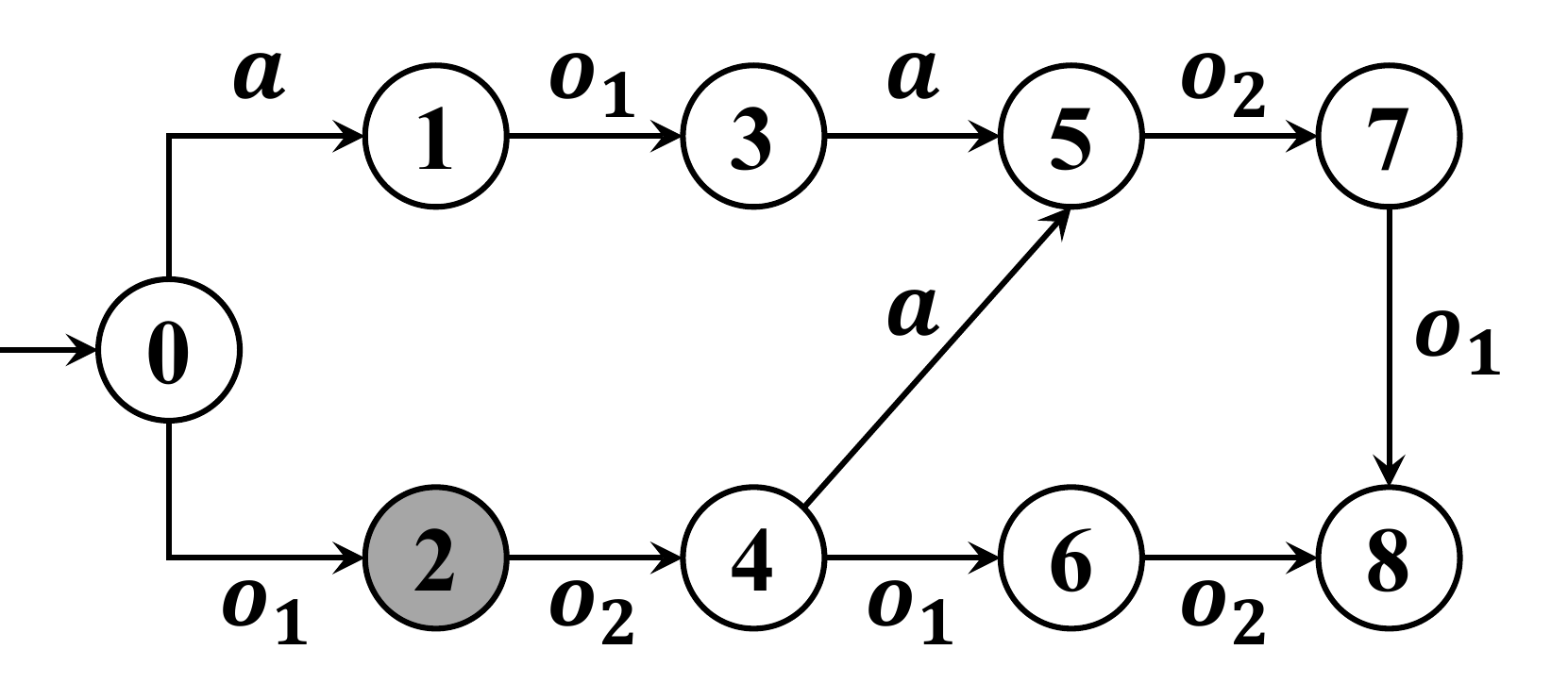} 
    \caption{System $G$ with $\Sigma_o=\{o_1,o_2\},\Sigma_c=\{a\},X_s=\{2\}$.}
	\label{SystemG3}
    \end{figure}
	
	\begin{myexm}\upshape 
	Let us consider system $G$ in  Fig.~\ref{SystemG3} with $\Sigma_o=\Sigma_{uc}=\{o_1, o_2\}$ and $X_S=\{2\}$. 
	Note that all events in string  $o_1o_2o_1o_2$ are uncontrollable and there does not exist another string in $G$ having the same projection. Therefore,  we cannot enforce infinite-step opacity qualitatively, 
	i.e., no matter what the supervisor does, the intruder will know for sure that the system is/was at state $2$ for the instant when $o_1$ is observed. However, the supervisor can control how late the secret is revealed. 
	
	For example, let us consider supervisor $S_1$ that always enables all events. 
	Then for $P(o_1o_2o_1o_2)=o_1o_2o_1o_2$, we have $ \textsc{Rev}(o_1, o_1o_2o_1o_2)=\{o_1o_2o_1o_2\}$ 
	and $ \textsc{Rev}(\alpha', o_1o_2o_1o_2)=\emptyset$ for $\alpha'\neq o_1$. 
	Therefore, we have 
	\begin{align}
	&\texttt{Cost}(o_1o_2o_1o_2)
	=\sum_{\alpha'\leq o_1o_2o_1o_2 } \texttt{Cost}(\alpha', o_1o_2o_1o_2)\nonumber\\
	&= \texttt{Cost}(o_1, o_1o_2o_1o_2) = C(|o_1o_2o_1o_2|-|o_1|) =N_{max}-3\nonumber
	\end{align}
	Similarly, we can compute 
	$\texttt{Cost}(o_1o_2o_2o_1)=\texttt{Cost}(o_1, o_1o_2o_2o_1)=C(|o_1o_2o_2|-|o_1|)= N_{max}-2$, 
	which is the worst-case cost in $S_1/G$, i.e., 
	$\texttt{Cost}(S_1/G)=N_{max}-2$.
 
    If we consider supervisor $S_2$ that disables $a$ initially, 
    then we have $\textsc{Rev}(o_1, o_1o_2o_1o_2)=\{o_1,o_1o_2,o_1o_2o_1,o_1o_2o_1o_2\}$
    and $\textsc{Rev}(o_1, o_1o_2o_2o_1)=\{o_1,o_1o_2,o_1o_2o_2,o_1o_2o_1o_1\}$. 
    Therefore, 
    $\texttt{Cost}(o_1o_2o_1o_1)\!=\!\texttt{Cost}(o_1o_2o_2o_1)\!=\!\texttt{Cost}(o_1)\!= \!N_{max}$, 
    which is the worst-case cost in $S_2/G$, i.e., 
	$\texttt{Cost}(S_2/G)=N_{max}$.
	\end{myexm}

	\subsection{Augmented Information-State} 
	To solve the quantitative synthesis problem, the previous proposed information-state is not sufficient as the time information is lost, i.e., we only remember all possible delayed state estimates without specifying \emph{when} they are visited. 
	Therefore, we further augment this information to the previous proposed information-states, which  leads to the \emph{augmented information-states} defined by 
	\[
	I_a:=2^X\times 2^{2^{X\times X}\times \{0,1,\ldots, N_{max}-1\}}.
	\]
	Each augmented information-state $\imath \in I_a$ is in the form of $\imath=(C_a(\imath),D_a(\imath))$. 
	The first component $C_a(\imath)\in 2^X$ is still a current state estimate. 
	Each element in the second component $ D_a(\imath)$ is in the form of  
	\[
	(m,k)=(\{(x_1,x'_1),(x_2,x'_2),\cdots,(x_n,x'_n)\},k)\in D_a(\imath), 
	\]
	where  the first and the second parts represent, respectively, a delayed state estimate and when it is visited, 
	and we define 
	\[
	m_1=\{x_1,x_2,\dots, x_n\}=\{x\in X: (x,x')\in m\}.
	\]
	Note that we only augment the timing information for $N_{max}$ steps as we consider a cost function that decreases to zero in $N_{max}$ steps. 
    For an augmented information-state $\imath \in I_a$, we define
    \begin{equation}
    \textsc{Rev}(\imath)=\{(m,k)\in D_a(\imath):  m_1\subseteq X_S \} 
    \end{equation}
    as those secret-revealing delayed state estimates.

	Similarly, suppose that the system's  augmented information-state is $\imath=(C_a(\imath),D_a(\imath))$.
	Then, upon the occurrence of an observable event  $\sigma\in \Sigma_{o}$ and a new  control decision $\gamma\in\Gamma$,  the augmented information-state  $\imath$ is updated  to $\imath'=(C_a(\imath'),D_a(\imath'))$ as follows:

	\begin{center}
        \tcbset{width=8.5cm,lefttitle=0.7cm,title=Augmented Information-State Updating Rule,colback=white}
    \begin{tcolorbox} 
\begin{equation}
	\!\!\!\!\!\left\{
	\begin{aligned}\label{Updating1}
	C_a(\imath')
	= &U\!R_{\gamma}(N\!X_{\sigma}(C_a(\imath)))\\
	D_a(\imath')
	= &
	 \!\left\{\!\! \!
	 \begin{array}{c c}
	 (\widetilde{U\!R}_{\gamma}(\widetilde{N\!X}_{\sigma}(m)),k+1): \\
	 (m, k) \in D_a(\imath)\setminus \textsc{Rev}(\imath),  k+1< N_{max}
	\end{array}\!\!
	\right\} \\
	& \cup \{( \odot_{\gamma}(C_a(\imath')),0)\} 
	\end{aligned}
	\right.
	\end{equation}
    \end{tcolorbox}
\end{center}
	
	Compared with the information-state updating rule  in Equation~\eqref{Updating}, 
	the augmented information-state updating rule in Equation~\eqref{Updating1} has the following differences: 
	\begin{enumerate}
	    \item 
	    The timing information is also tracked, which increases a time unit upon the occurrence of each observable event; and
	    \item
	    Only those delayed state estimates in the last $N_{max}-2$ steps and  have not yet been revealed are updated.  
	    This is because  the cost decreases to zero within $N_{max}$ steps and we only consider the cost incurred for the first secret-revelation-instant. 
	\end{enumerate} 
 	
    Still, let $\alpha=\sigma_1\sigma_2\cdots\sigma_n\in P(\mathcal{L}(S/G))$ be an observable string.
    We denote by    $\mathcal{I}_a(\alpha)$ the augmented information-state reached by $\alpha$, which is defined according to Equation~\eqref{In-Evo1} 
    using the augmented updating rule in Equation~\eqref{Updating1} with initial state  
    $\imath_0=(U\!R_{S(\epsilon)}(\{x_0\} ),\{(\odot_{S(\epsilon)}(U\!R_{S(\epsilon)}( \{x_0\} )), 0)\})$.
    Similar to the previous information-state, the augmented information-state has the following properties.  
    
	\begin{mypro}\label{pro:2}
	Let $S$ be a supervisor, $\alpha\in P(\mathcal{L}(S/G))$ be an observable string and $\mathcal{I}_a(\alpha)$ be the augmented information-state reached.
	Then we have
		\begin{enumerate}[(i)]
			\item
			$C_a(\mathcal{I}_a(\alpha))=\hat{X}_{S/G}(\alpha)$; and
			\item
			$D_a(\mathcal{I}_a(\alpha))=\\
			\left\{
			(\rho_{\alpha',\alpha}, |\alpha\mid-|\alpha'\mid)
			:
			\begin{gathered}
            \alpha'\leq\alpha , |\alpha\mid-|\alpha'\mid<  N_{max},\\
			[\forall \alpha'\leq \beta< \alpha :  \textsc{Rev}(\alpha',\beta)=\emptyset]
			\end{gathered}
			\right\}$ 
		\end{enumerate} 
	\end{mypro}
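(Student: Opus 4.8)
The plan is to mirror the proof of Proposition~\ref{prop:main}, proceeding by induction on $|\alpha|$ and treating parts (i) and (ii) separately. Part~(i) is immediate: the update for $C_a$ in Equation~\eqref{Updating1} coincides verbatim with the update for $C$ in Equation~\eqref{Updating}, so iterating the $U\!R$ and $N\!X$ operators yields the current state estimate exactly as established in \cite{yin2016uniform}, giving $C_a(\mathcal{I}_a(\alpha))=\hat{X}_{S/G}(\alpha)$. The substance is in part~(ii), and the key bridge I would use throughout is the identity $(\rho_{\alpha',\alpha})_1 = \hat{X}_{S/G}(\alpha'\mid\alpha)$, which follows directly from the definitions of $\rho_{\alpha',\alpha}$ and $m_1$; this lets me translate the membership condition $(m,k)\in\textsc{Rev}(\imath)$, i.e.\ $m_1\subseteq X_S$, into the revelation condition $\hat{X}_{S/G}(\alpha'\mid\alpha)\subseteq X_S$.

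For the base case $\alpha=\epsilon$, I would reuse the computation from the induction basis of Proposition~\ref{prop:main} showing $\odot_{S(\epsilon)}(U\!R_{S(\epsilon)}(\{x_0\}))=\rho_{\epsilon,\epsilon}$, so that $D_a(\mathcal{I}_a(\epsilon))=\{(\rho_{\epsilon,\epsilon},0)\}$. This matches the right-hand side since the only prefix is $\alpha'=\epsilon$, the age $|\epsilon|-|\epsilon|=0<N_{max}$, and the non-revelation quantifier ranges over the empty set and so holds vacuously.

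For the inductive step, assuming the claim for $\alpha$ with $|\alpha|=k$ and writing $\gamma=S(\alpha)$, I would expand $D_a(\mathcal{I}_a(\alpha\sigma))$ using Equation~\eqref{Updating1} into its two pieces. For the propagated piece, I would reuse the computation $\widetilde{U\!R}_{\gamma}(\widetilde{N\!X}_{\sigma}(\rho_{\alpha',\alpha}))=\rho_{\alpha',\alpha\sigma}$ from Equation~\eqref{D(1)}, while the age increments from $k$ to $k+1$, so that $|\alpha|-|\alpha'|$ becomes $|\alpha\sigma|-|\alpha'|$. The crux is matching the surviving constraints against the target: removing $\textsc{Rev}(\mathcal{I}_a(\alpha))$ appends exactly the condition $\hat{X}_{S/G}(\alpha'\mid\alpha)\not\subseteq X_S$, which, conjoined with the inductive non-revelation condition $\forall\,\alpha'\leq\beta<\alpha:\textsc{Rev}(\alpha',\beta)=\emptyset$, yields precisely $\forall\,\alpha'\leq\beta<\alpha\sigma:\textsc{Rev}(\alpha',\beta)=\emptyset$ (since the range $\beta<\alpha\sigma$ equals $\beta\leq\alpha$, splitting into $\beta<\alpha$ and $\beta=\alpha$); and the truncation $k+1<N_{max}$ is the binding age bound, which implies the inductive bound $|\alpha|-|\alpha'|<N_{max}$, so no term required by the target is ever dropped. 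For the $\odot$ piece, part~(i) gives $C_a(\mathcal{I}_a(\alpha\sigma))=\hat{X}_{S/G}(\alpha\sigma)$, and the computation $\odot_{\gamma}(\hat{X}_{S/G}(\alpha\sigma))=\rho_{\alpha\sigma,\alpha\sigma}$ from Equation~\eqref{o} contributes the single pair $(\rho_{\alpha\sigma,\alpha\sigma},0)$, corresponding to the new instant $\alpha'=\alpha\sigma$, whose non-revelation quantifier is again vacuous. Taking the union combines the $\alpha'<\alpha\sigma$ terms with the $\alpha'=\alpha\sigma$ term, which is exactly the asserted right-hand side, completing the step.

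The main obstacle I anticipate is the bookkeeping in the inductive step: one must verify carefully that the pruning of revealed estimates, the age truncation at $N_{max}$, and the cumulative non-revelation quantifier all align, in particular that a pair once removed by $\textsc{Rev}$ is never re-introduced (the $\odot$ piece only ever inserts the brand-new instant $\alpha'=\alpha\sigma$ at age $0$, whose prefix is distinct from every earlier one). It is worth recording the monotonicity fact that $\beta\leq\beta'$ implies $\hat{X}_{S/G}(\alpha'\mid\beta')\subseteq\hat{X}_{S/G}(\alpha'\mid\beta)$, so that revelation is irreversible; this is not logically required by the direct computation, but it confirms that pruning is consistent with the hereditary condition $\textsc{Rev}(\alpha',\beta)=\emptyset$ and that nothing is lost by not re-propagating revealed pairs.
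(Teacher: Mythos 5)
Your proposal is correct and follows exactly the route the paper intends: the paper itself omits the detailed argument, stating only that the proof mirrors Proposition~\ref{prop:main} with two modifications (the step counter and the pruning of already-revealed estimates via $\textsc{Rev}$), which are precisely the two points your induction elaborates. Your bookkeeping of the age bound $k+1<N_{max}$ and the translation of $(m,k)\notin\textsc{Rev}(\mathcal{I}_a(\alpha))$ into the cumulative non-revelation quantifier is a faithful and correct filling-in of the details the paper leaves implicit.
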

	\begin{proof}
	The proof is very similar to that of Proposition~\ref{prop:main}; hence a detailed proof is omitted.  
	The only differences are: 
	(i) the augmented updating rule has a counter that remembers the number of steps between $\alpha'$ and $\alpha$; and 
	(ii) only those delayed state estimates that are not secret-revealing are updated, 
	i.e., $ \textsc{Rev}(\alpha',\beta)$ should be empty-set for any strict prefix $\beta$ before $\alpha$; 
	otherwise, the delayed state estimate will be dropped according to the updating rule.  
	\end{proof}
	
	With the help of Proposition~\ref{pro:2}, now  we can relate the proposed augmented information-state with the cost function as follows. 
	Note that a secret-revelation  cost occurs at the instant when the secret is revealed \emph{for the first time}.   
	This is captured by $\textsc{Rev}(\mathcal{I}_a(\alpha))$
	and for any $(m,k)\in \textsc{Rev}(\mathcal{I}_a(\alpha))$ such that $m_1\subseteq X_S$, this secret revelation is only counted once as it will not be updated according to the augmented updating rule.  
	Therefore, we can define a state-based cost function 
	\[C_I: I_a\to \{0,1,\dots,N'_{max}\}
	\]
	assigning each augmented information-state $\imath\in I_a$ a cost by 
	\begin{equation}
	    C_I(\imath)= \sum_{(m,k)\in \textsc{Rev}(\imath)}  C(k)
	\end{equation}
	Note that the cost of each information-state is upper-bounded by $N'_{max}\leq 1+2+\cdots+N_{max}=\frac{1}{2}N_{max}(N_{max}+1)$, 
	which corresponds to the extreme case where each of the previous $N_{max}$ visits a secret state and they are all revealed for the first time at the current instant.   
	
    The following result shows that, for any  observable string $\alpha\in P(\mathcal{L}(S/G))$, 	
    the total secret-revelation cost incurred $\texttt{Cost}(\alpha)$ is equal to the summation of the costs of all information-states reached along $\alpha$. 
    
	\begin{mythm}\label{thm:cost}
	Let $S$ be a supervisor, $\alpha\in P(\mathcal{L}(S/G))$ be an observable string and 
	for each prefix $\alpha'\leq\alpha$,  $\mathcal{I}_a(\alpha')$ be the augmented information-state reached by $\alpha'$.
	Then we have
    \begin{equation}
    \texttt{Cost}(\alpha)= \sum_{\alpha'\leq\alpha }C_I(\mathcal{I}_a(\alpha')).
    \end{equation}
	\end{mythm}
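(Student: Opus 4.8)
The plan is to prove the identity by induction on $|\alpha|$, reducing the whole statement to a single incremental identity. Since
\[
\sum_{\alpha'\leq\alpha\sigma}C_I(\mathcal{I}_a(\alpha'))=\sum_{\alpha'\leq\alpha}C_I(\mathcal{I}_a(\alpha'))+C_I(\mathcal{I}_a(\alpha\sigma)),
\]
the induction hypothesis lets me replace the first sum by $\texttt{Cost}(\alpha)$, so it suffices to establish $C_I(\mathcal{I}_a(\alpha\sigma))=\texttt{Cost}(\alpha\sigma)-\texttt{Cost}(\alpha)$. For the base case $\alpha=\epsilon$, Proposition~\ref{pro:2} gives $D_a(\mathcal{I}_a(\epsilon))=\{(\rho_{\epsilon,\epsilon},0)\}$, whose only pair has first component $\hat{X}_{S/G}(\epsilon)$; hence $C_I(\mathcal{I}_a(\epsilon))=C(0)$ exactly when $\hat{X}_{S/G}(\epsilon)\subseteq X_S$ and $0$ otherwise, which matches $\texttt{Cost}(\epsilon)=\texttt{Cost}(\epsilon,\epsilon)$ by definition.

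For the incremental identity I would decompose
\[
\texttt{Cost}(\alpha\sigma)-\texttt{Cost}(\alpha)=\sum_{\gamma\leq\alpha}\left[\texttt{Cost}(\gamma,\alpha\sigma)-\texttt{Cost}(\gamma,\alpha)\right]+\texttt{Cost}(\alpha\sigma,\alpha\sigma).
\]
The crucial observation is that, because $C$ is monotonically non-increasing, the per-visit cost $\texttt{Cost}(\gamma,\cdot)$ is frozen at the value $C(|\beta_{short}|-|\gamma|)$ once the visit at $\gamma$ has been revealed, where $\beta_{short}$ is the first (shortest) revelation instant: extending the observation can only append longer witnesses to $\textsc{Rev}(\gamma,\cdot)$, and the maximum defining $\texttt{Cost}(\gamma,\cdot)$ is always attained at the shortest witness. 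Consequently $\texttt{Cost}(\gamma,\alpha\sigma)-\texttt{Cost}(\gamma,\alpha)=0$ for every $\gamma$ already revealed within $\alpha$, and the only surviving terms are those $\gamma\leq\alpha$ with $\textsc{Rev}(\gamma,\alpha)=\emptyset$ yet $\hat{X}_{S/G}(\gamma\mid\alpha\sigma)\subseteq X_S$, i.e.\ the visits whose \emph{first} revelation occurs exactly at $\alpha\sigma$, each contributing $C(|\alpha\sigma|-|\gamma|)$. Together with the fresh term $\texttt{Cost}(\alpha\sigma,\alpha\sigma)$ (the $\gamma=\alpha\sigma$ case, revelation time $0$), the increment equals the sum of $C(|\alpha\sigma|-|\gamma|)$ over all $\gamma\leq\alpha\sigma$ whose first revelation is at $\alpha\sigma$.

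It then remains to identify this sum with $C_I(\mathcal{I}_a(\alpha\sigma))=\sum_{(m,k)\in\textsc{Rev}(\mathcal{I}_a(\alpha\sigma))}C(k)$. By Proposition~\ref{pro:2}, each pair of $D_a(\mathcal{I}_a(\alpha\sigma))$ has the form $(\rho_{\gamma,\alpha\sigma},|\alpha\sigma|-|\gamma|)$ subject to $|\alpha\sigma|-|\gamma|<N_{max}$ and $\textsc{Rev}(\gamma,\beta)=\emptyset$ for all $\gamma\leq\beta<\alpha\sigma$, and the first component $m_1$ of $\rho_{\gamma,\alpha\sigma}$ equals $\hat{X}_{S/G}(\gamma\mid\alpha\sigma)$ by the reasoning of Corollary~\ref{cor:delay}. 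Thus the defining condition $m_1\subseteq X_S$ of $\textsc{Rev}(\mathcal{I}_a(\alpha\sigma))$, combined with the ``not-revealed-before'' constraint carried by the updating rule, singles out exactly the $\gamma$ whose first revelation is at $\alpha\sigma$, with $k=|\alpha\sigma|-|\gamma|$ recording the secret-revelation-time. The two sums therefore range over the same index set, except that $C_I$ enforces $k<N_{max}$; this discrepancy is harmless because $C(k)=\max\{N_{max}-k,0\}=0$ for $k\geq N_{max}$, so the truncated terms carry zero cost on both sides.

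I expect the main obstacle to be the bookkeeping that guarantees each secret visit is charged exactly once, which forces two distinct ``count-once'' conventions to be shown to coincide. On the cost side, the non-increasing property of $C$ freezes $\texttt{Cost}(\gamma,\cdot)$ at its first-revelation value, so no later prefix re-charges the same visit; on the information-state side, the augmented updating rule of Equation~\eqref{Updating1} \emph{drops} a pair $(m,k)$ as soon as $m_1\subseteq X_S$ (through the exclusion of $\textsc{Rev}(\imath)$), so a once-revealed visit never reappears in any later $\textsc{Rev}(\mathcal{I}_a(\cdot))$. Verifying that these conventions align, and that the $N_{max}$ truncation built into the updating rule is exactly compensated by the vanishing tail of $C$, is the delicate part; once it is in place, reorganizing the right-hand double sum by the visit instant $\gamma$ collapses it to $\sum_{\gamma\leq\alpha}\texttt{Cost}(\gamma,\alpha)=\texttt{Cost}(\alpha)$, completing the induction.
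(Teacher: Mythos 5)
Your proposal is correct. The paper proves the identity globally rather than by induction: it enumerates the secret-visiting instants $i_1<\cdots<i_k$ for which $\textsc{Rev}(\sigma_1\cdots\sigma_{i_p},\alpha)\neq\emptyset$, pairs each with its first revelation instant $j_p$, and argues that both sides of the identity equal $\sum_{p}C(j_p-i_p)$ --- the left-hand side directly from the definition of $\texttt{Cost}$ and the non-increasing cost, the right-hand side because Proposition~\ref{pro:2} and the updating rule guarantee that the pair tracking instant $i_p$ appears in $\textsc{Rev}(\mathcal{I}_a(\cdot))$ only at $\imath_{j_p}$, contributing $C(j_p-i_p)$ exactly there. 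Your telescoping identity $C_I(\mathcal{I}_a(\alpha\sigma))=\texttt{Cost}(\alpha\sigma)-\texttt{Cost}(\alpha)$ is the same double sum reorganized by observation step rather than by secret visit, and it rests on the same two ingredients: the freezing of each per-visit cost at its first revelation (from monotonicity of $C$) and the charge-once mechanism built into Equation~\eqref{Updating1}. What your organization buys is that it forces the edge cases into the open --- in particular the base case and the fact that the $k<N_{max}$ truncation in the updating rule is harmless precisely because $C$ vanishes from $N_{max}$ onward --- points that the paper's terse proof passes over silently.
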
	
	\begin{proof} 
	Suppose $\alpha=\sigma_1\sigma_2\ldots\sigma_n\in P(\mathcal{L}(S/G))$ be an observable string.
	Let $i_1< i_2<\cdots< i_k$ be the indices such that $\textsc{Rev}(\sigma_1\ldots\sigma_{i_p}, \alpha)\neq\emptyset, \forall p=1, \ldots, k$. 
	Then for each $i_p$, we denote by $j_p$ the first instant when the secret at instant $\sigma_1\ldots\sigma_{i_p}$  revealed, 
	i.e., $\sigma_1\ldots\sigma_{j_p}$ is the shortest string in $\textsc{Rev}(\sigma_1\ldots\sigma_{i_p}, \alpha)$. 
	Then by the definition of $\texttt{Cost}(\alpha)$, we have
	\[\texttt{Cost}(\alpha)
	=\sum_{\alpha'\leq\alpha } \texttt{Cost}(\alpha', \alpha)
	=\sum\limits_{p=1,\ldots,k}C(j_p-i_p)\]
	Then let $\imath_0,\imath_1,\dots,\imath_n$ be all augmented information-states reached along $\alpha$. 
	Then by Equation~\eqref{Updating1} and Proposition~\ref{pro:2}, 
	each $p=1,\dots, n$ only contributes a cost of $C(j_p-i_p)$ via $C_I$ at information-state $\imath_{j_p}$. 
	Therefore, we have 
	\begin{equation}
	\begin{aligned}
	&\sum_{\alpha'\leq\alpha }C_I(\mathcal{I}_a(\alpha'))  = \sum_{\alpha'\leq\alpha }\sum_{(m,k)\in \textsc{Rev}(\mathcal{I}_a(\alpha'))} C(k) \\
	=& \sum_{ i=1,\dots,n }C_I( \imath_i )  
	=\sum\limits_{p=1,\ldots,k}C(j_p-i_p)\nonumber
	\end{aligned}
	\end{equation}
	This completes the proof.
	\end{proof}
    
	\begin{myexm}\upshape
	Let us consider system $G$ in  Fig.~\ref{SystemG3} and $N_{max}=5$. 
	Suppose that  supervisor $S$ enables $a$ initially, i.e. $S(\epsilon)=\{o_1, o_2, a\}$.
	Then we have
	$$
	\begin{aligned}
	    \mathcal{I}_a(\epsilon)=&(U\!R_{S(\epsilon)}(\{x_0\} ),\{(\odot_{S(\epsilon)}(U\!R_{S(\epsilon)}( \{x_0\} )), 0)\})\\
	=&(\{0,1\}, \{(\{(0,0), (0,1), (1, 1)\}, 0)\})
	\end{aligned}
	$$
    When event $o_1$ is observed, if the decision of the supervisor is $S(o_1)=\{o_1, o_2\}$, 
    then the augmented information-state is updated to $\mathcal{I}_a(o_1)$, where
	$$
		\begin{aligned}
		C_a(\mathcal{I}_a(o_1))=&U\!R_{S(o_1)}(N\!X_{o_1}(C_a(\mathcal{I}_a(\epsilon))))  =\{2, 3\}\\
		D_a(\mathcal{I}_a(o_2))=&\left\{\!\!\! 
	    \begin{array}{c c}
	    (\widetilde{U\!R}_{S(o_1)}(\widetilde{N\!X}_{o_1}(m)),k+1): \\
	    (m,k) \in D_a(\mathcal{I}_a(\epsilon))\setminus \textsc{Rev}(\mathcal{I}_a(\epsilon)),  k+1< 5
	    \end{array}\!\!\!\right\} \\
	    & \cup \{( \odot_{S(o_1)}(C_a(\mathcal{I}_a(o_1))),0)\} \\
		=&
		\left\{\!\!\!
		\begin{array}{c c}
		(\{(0,2),(0,3),(1,3)\}, 1),\\
		(\{(2,2),(3,3)\}, 0)
		\end{array}
		\!\!\!\!
		\right\}
		\end{aligned}
		$$
		Again, if the supervisor further makes  control decision $S(o_1o_2)=\{o_1, o_2, a\}$ 
		when event $o_2$ is observed, then 
		the augmented information-state is updated to $\mathcal{I}_a(o_1o_2)$, where
		$$
		\begin{aligned}
		&C_a(\mathcal{I}_a(o_1o_2))=U\!R_{S(o_1o_2)}(N\!X_{o_2}(C_a(\mathcal{I}_a(o_1))))
		=\{4, 5\}\\
		&D_a(\mathcal{I}_a(o_1o_2))\\
		=
		&\left\{\!\!\! \!
	    \begin{array}{c c}
	    (\widetilde{U\!R}_{S(o_1o_2)}(\widetilde{N\!X}_{o_2}(m)),k+1): \\
	    (m,k) \in D_a(\mathcal{I}_a(o_1))\!\setminus\! \textsc{Rev}(\mathcal{I}_a(o_1))),  k+1< 5
	    \end{array}\!\!\!\!\right\} \\
	     &\cup \{( \odot_{S(o_1o_2)}(C_a(\mathcal{I}_a(o_1))),0)\} \\
		=
		&\left\{\!\!\!
		\begin{array}{c c}
		(\{(0,4),(0,5)\}, 2), 
		(\{(2,4),(2,5)\}, 1), \\
		(\{(4,4),(4,5),(5,5)\}, 0)
		\end{array}
		\!\!\!\!
		\right\}
		\end{aligned}
		$$
		Note that there is a secret-revealing delayed state estimate in $\mathcal{I}_a(o_1o_2)$.
		Specifically, for $(m,k)=(\{(2,4), (2,5)\}, 1)\in D_a(\mathcal{I}_a(o_1o_2))$, we have $m_1=\{2\}\subseteq X_S$.
		Therefore,  we have $\textsc{Rev}(\mathcal{I}_a(o_1o_2))=\{(\{(2,4), (2,5)\}, 1)\}$ and 
		the cost of augmented information-state is $C_I(\mathcal{I}_a(o_1o_2))=C(1)=N_{max}-1=4$.
	\end{myexm}
	\subsection{Quantitative Synthesis Algorithm}
	
	Theorem~\ref{thm:cost} suggests the basic idea for solving the quantitative synthesis problem.  
	One can consider the privacy-enforcing control problem as an optimal control problem for \emph{accumulated total cost}.  
	Similar to the safety control problem over the information-state space for the qualitative synthesis, 
	we can solve the quantitative optimal control problem over the augmented-state-space defined by the \emph{augmented bipartite transition system} (A-BTS), which is the same of the BTS but incorporating the augmented information-state updating rule. 
	
	\begin{mydef}
		An augmented bipartite transition system (A-BTS) $T$ w.r.t.\ $G$ is a 7-tuple.
		\begin{equation}\label{T1}
		T=(Q_Y^T,Q_Z^T,h_{Y\!Z}^T,h_{ZY}^T,\Sigma_o,\Gamma,y_0^T),
		\end{equation}
		where
		\begin{itemize}
			\item
			$Q_Y^T\subseteq I_a$ is the set of $Y$-states.
			Therefore, a $Y$-state $y\in Q_Y^T$ is in the form of  $y=(C_a(y),D_a(y))$;
			\item
			$Q_Z^T\subseteq I_a\times\Gamma$ is the set of $Z$-states.
			For each $z\in Q_Z^T$,  $I_a(z)$ and  $\Gamma(z)$ denote, respectively, the augmented information-state and the control decision, so that $z=(I_a(z),\Gamma(z))$.
			For simplicity, we  write  $z=(C_a(z), D_a(z), \Gamma(z))$;
			\item
			$h_{Y\!Z}^T:Q_Y^T\times\Gamma\rightarrow Q_Z^T$ is the partial transition function from $Y$-states to $Z$-states satisfying the following constraint:  for any $h_{Y\!Z}^T(y,\gamma)=z$, we have
			\begin{equation}\label{hYZ2}
			\left\{
			\begin{aligned}
			C_a(z)=& U\!R_{\gamma}(C_a(y)) \\
			D_a(z)=& \{ (\widetilde{U\!R}_{\gamma}(m),k):
			(m,k)\!\in\! D_a(y)\}
			\\&\cup \{ (\odot_{\gamma}(C_a(z)),0)\}  \\
			\Gamma(z)=& \gamma
			\end{aligned}
			\right.
			\end{equation}
			\item
			$h_{ZY}^T:Q_Z^T\times\Sigma\rightarrow Q_Y^T$ is the partial transition function from $Z$-states to $Y$-states satisfying the following constraint:  for any  $h_{ZY}^T(z,\sigma)=y$, we have $\sigma\in\Gamma(z)\cap\Sigma_o$ and
			\begin{equation}\label{hZY2}
			\left\{
			\begin{aligned}
			C_a(y)=&N\!X_\sigma(C_a(z)) \\
			D_a(y)=&
			\left\{\!\!
			\begin{array}{c c}
			(\widetilde{N\!X}_\sigma(m),k+1):\\
			(m,k)\!\in\! D_a(z)\!\setminus\! \textsc{Rev}(z), k+1\!< \!N_{max}
			\end{array}\!\!
			\right\}
			\end{aligned}
			\right.
			\end{equation}
			\item
			$\Sigma_o$ is the set of observable events of $G$;
			\item
			$\Gamma$ is the set of admissible control decisions of $G$;
			\item
		    $y_0^T:=(\{x_0\},\{\emptyset\})\in Q_Y^T$ is the initial $Y$-state.
		\end{itemize}
	\end{mydef}
	
 \begin{figure*}[htbp]
		\centering
		\includegraphics[width=0.8\linewidth]{./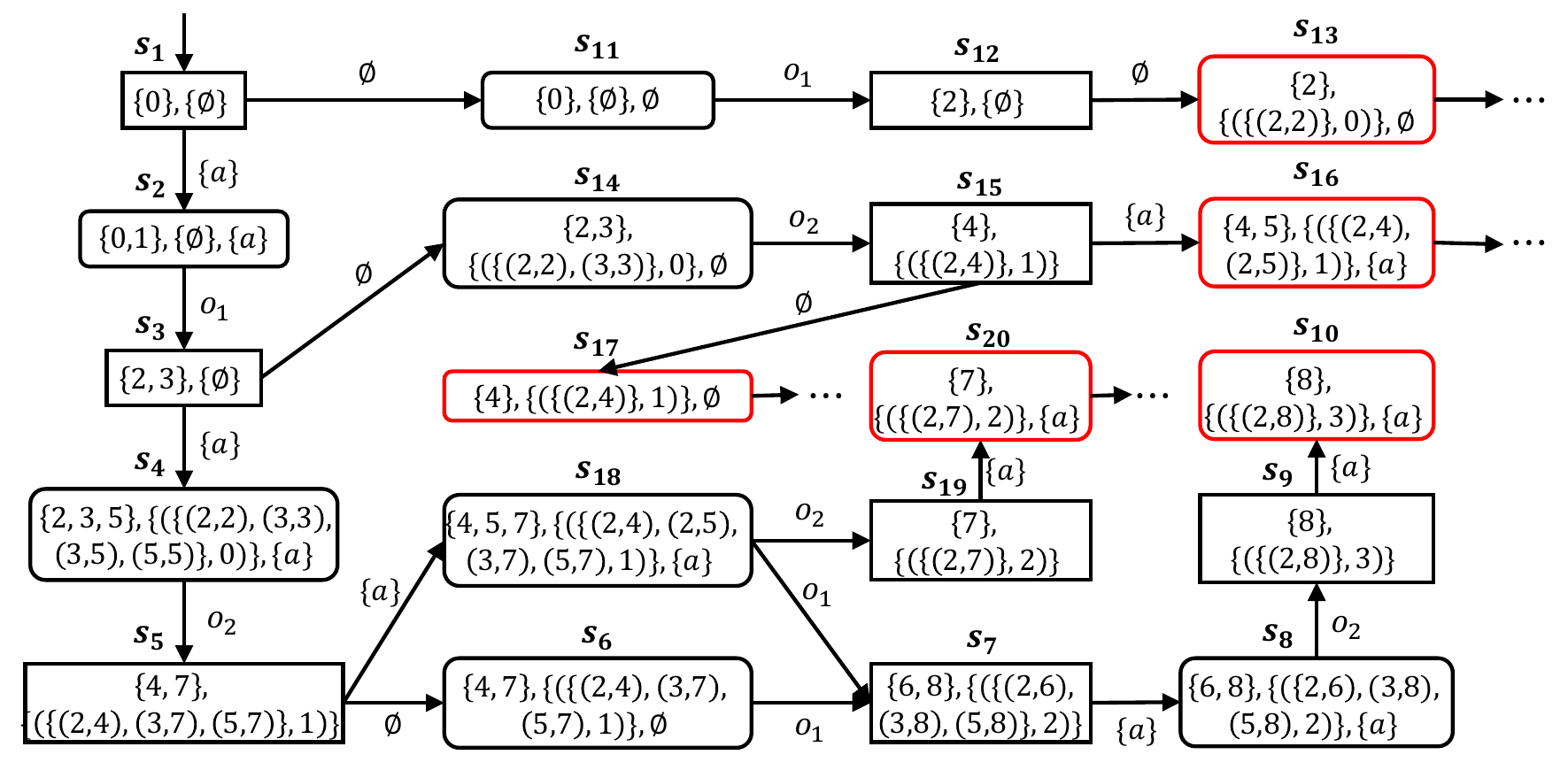}
		\caption{Example of the construction of the A-BTS $T_{total}$.
			Rectangular states correspond to $Y$-states and rectangular  states with rounded corners correspond to $Z$-states.}
		\label{BTS2}
	\end{figure*}
	
    To solve the quantitative synthesis problem, one can still think it as a two-player game between the supervisor and the environment. 
    The goal of the supervisor is to minimize the total cost incurred, while the environment wants to maximize the cost. 
    Still, we  construct the largest A-BTS w.r.t. $G$ that enumerates all the feasible transitions satisfying the constraints of $h_{Y\!Z}^T$ and $h_{ZY}^T$ and denote such an all-feasible A-BTS by $T_{total}$, which is the arena of the game.   
    For each  state $q \in Q_Y^{T_{total}}\cup Q_Z^{T_{total}}$,  we denote by $\textsc{Post}(q)$ the set of all its successor  states.  
    Then we compute the value of each state in $T_{total}$ by  value   iterations as follows: 
	\begin{equation}
	V_{k+1}(q)\!=\! \left\{\!
		\begin{aligned}
		&\min_{q'\in \textsc{Post}(q)}{V_k(q')}  &\text{if }& q\in Q_Y^{T_{total}}\\
		& \max_{q'\in \textsc{Post}(q)}{V_k(q')}+ C_I( \mathcal{I}_a(q)) &\text{if }& q\in Q_Z^{T_{total}}
		\end{aligned}
		\right.
		\label{ValueIteration}
	\end{equation}
    and the initial value function is 
    \[
    \forall q\in Q_Y^{T_{total}}\cup Q_Z^{T_{total}}: V_0(q)=0.
    \]  
    Note that, by our construction, the secret-revelation cost occurs at each $Z$-state $z$ such that $\textsc{Rev}(I_a(z))\neq \emptyset$; 
    this is why the current cost is only added to $Z$-states at each iteration. 
    
    The above is the standard value iteration technique that has been extensively investigated in the literature, 
    either in the context of optimal total-cost control problem \cite{puterman2014markov,bertsekas1995dynamic} or in the context of resource games \cite{chakrabarti2003resource,boker2014temporal}. 
    The value iteration will converge to the value function denoted by $V^*$, possibly in infinite number of steps as the value of some states may be infinite.   
    However, it is known that such a value for each state can be determined only by a finite number of iterations for at most $L=n^2\cdot N'_{max}$ steps \cite{wu2016synthesis}, where $n=|Q_Y^{T_{total}}\cup Q_Z^{T_{total}}|$ is the number of states in $T_{total}$.  
    Specifically, by computing value function $V_L$, we have   
    \begin{equation}
	V^*(q)\!=\! \left\{\!
    \begin{array}{l l}
       V_L(q) \quad
       &\text{if } V_L(q)< n\cdot N'_{max} \\
       \infty\quad
       &\text{otherwise}
     \end{array}
		\right. .
	\end{equation} 
    In other words, $V^*(q)$ is the best cost-to-go the supervisor can guarantee at state $q$.  
    
    Based on the above discussion, Algorithm~2 is proposed to solve Problem~2. 
    First, it builds $T_{total}$ based on the A-BTS and computes the value function $V^*$. 
    If $V^*(y_0)=\infty$, then we cannot find a supervisor whose cost is finite. 
    Otherwise, $V^*(y_0)$ is the optimal cost one can achieve, i.e., $\texttt{Cost}(S/G)=V^*(y_0)$. 
    To execute the supervisor, we use a variable  $C_{rem}$ to record the total cost remained for the supervisor  to attain the optimal value. 
    Formally, let $V^*$ be the value function and $y$ be a $Y$-state in $T_{total} $ and $C_{rem}$ be the cost remaining. 
    We define 
    \[
     Dec_{V^*}(y,C_{rem})=\{ \gamma\in \Gamma:   z=h_{Y\!Z}^{T_{total}}(y,\gamma), V^*(z)\leq C_{rem} \}
    \]
    as the set of all control decisions that attain the optimal value.  
    Clearly, $\langle Dec_{V^*}(y,C_{rem}),\subseteq \rangle$ is also a finite poset and 
    we also denote by $Dec_{V^*}^{max}(y,C_{rem})$ the set of all maximal elements in $Dec_{V^*}(y,C_{rem})$.
    Then at each  $Y$-state, the supervisor chooses a maximal control decision from $Dec_{V^*}^{max}(y,C_{rem})$. 
    Once a  $Z$-state is reached, the cost remained is updated to $C_{rem}-C_I(\mathcal{I}_a(z))$ as a state cost incurred.

	\begin{algorithm}
		\SetKwData{Left}{left}
		\SetKwData{Up}{up}
		\SetKwFunction{FindCompress}{FindCompress}
		\SetKwInOut{Input}{input}
		\SetKwInOut{Output}{output}
		\nl $y\gets \{\{x_0\},\{{\emptyset}\}\}$, $C_{rem}=V^*(y)$\;
		\nl find a   control decision $\gamma$ from $Dec_{V^*}^{max}(y,C_{rem})$\;
		\nl make initial control decision $\gamma$\;
		\nl $z\gets h_{Y\!Z} (y,\gamma)$, $C_{rem}\gets C_{rem}-C_I(\mathcal{I}_a(z))$\;
		
		\nl \While{new event $\sigma\in \gamma\cap \Sigma_o$ is observed}
		{
			\nl $y\gets h_{ZY} (z,\sigma)$\;

			\nl find a   control decision $\gamma$ from $Dec_{V^*}^{max}(y,C_{rem})$\;
			\nl update the control decision to $\gamma$\;
			\nl $z\gets h_{Y\!Z} (y,\gamma)$,  $C_{rem}\gets C_{rem}-C_I(\mathcal{I}_a(z))$\;
		}
		\label{algorithm2}
		\caption{Execution of Quantitative Supervisor $S^*$}
	\end{algorithm}
	
	\begin{mythm}
	Supervisor $S^*$ defined by Algorithm~2 is optimal  and maximally permissive among all optimal supervisors.
	\end{mythm}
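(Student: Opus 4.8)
The plan is to read Algorithm~2 as a strategy in the two-player accumulated-cost game played on $T_{total}$ and to reduce the whole statement to the correctness of the value iteration in Equation~\eqref{ValueIteration}. First, by Theorem~\ref{thm:cost}, for any supervisor $S$ and any $\alpha\in P(\mathcal{L}(S/G))$ we have $\texttt{Cost}(\alpha)=\sum_{\alpha'\leq\alpha}C_I(\mathcal{I}_a(\alpha'))$, so $\texttt{Cost}(S/G)$ is exactly the worst-case accumulated state-cost over all plays consistent with $S$ in $T_{total}$, where cost is charged only at $Z$-states. This is precisely the quantity evaluated by the min-max Bellman recursion in Equation~\eqref{ValueIteration}, with the supervisor minimizing at $Y$-states and the environment maximizing at $Z$-states. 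I would first record (citing \cite{wu2016synthesis,puterman2014markov}) that $V^*$ is the value of this game: for every state $q$, $V^*(q)$ is the least worst-case cost-to-go the supervisor can guarantee, and the adversary can force at least $V^*(q)$.

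For optimality (condition (i) of Problem~\ref{problem2}), I would establish a budget invariant for the execution in Algorithm~2: along any play consistent with $S^*$, whenever a $Y$-state $y$ is reached with remaining budget $C_{rem}$, we have $V^*(y)\leq C_{rem}$ and the accumulated cost so far equals $V^*(y_0)-C_{rem}$. The base case holds since $C_{rem}$ is initialized to $V^*(y_0)$. For the inductive step, the chosen decision $\gamma\in Dec_{V^*}^{max}(y,C_{rem})$ satisfies $V^*(z)\leq C_{rem}$ for $z=h_{Y\!Z}(y,\gamma)$, and by the $Z$-state recursion every successor $Y$-state $y'$ obeys $V^*(y')\leq V^*(z)-C_I(\mathcal{I}_a(z))\leq C_{rem}-C_I(\mathcal{I}_a(z))$, which is exactly the updated budget; this also shows $C_{rem}$ stays nonnegative. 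Hence the accumulated cost on every play never exceeds $V^*(y_0)$, so $\texttt{Cost}(S^*/G)\leq V^*(y_0)$, while the adversary bound gives $\texttt{Cost}(S'/G)\geq V^*(y_0)$ for every $S'$; therefore $S^*$ attains the minimum and is optimal.

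For maximal permissiveness among optimal supervisors (condition (ii)), I would argue by contradiction in the style of Theorem~\ref{thm2}. Suppose an optimal $S'$ with $\mathcal{L}(S^*/G)\subsetneq\mathcal{L}(S'/G)$ exists; then there is a shortest string $w$ at which they first differ, with $S^*$ and $S'$ agreeing on all strict prefixes, so both reach the same $Y$-state $y$, incur the same accumulated cost $c$ along the common path, and $S^*(w)\subsetneq S'(w)$. Writing $z'=h_{Y\!Z}(y,S'(w))$, the adversary can extend the common play through $z'$ to force cost-to-go at least $V^*(z')$, yielding a play of total cost at least $c+V^*(z')$; optimality of $S'$ then forces $c+V^*(z')\leq V^*(y_0)$, i.e.\ $V^*(z')\leq V^*(y_0)-c=C_{rem}$, so $S'(w)\in Dec_{V^*}(y,C_{rem})$. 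This contradicts $S^*(w)$ being a maximal element of $Dec_{V^*}(y,C_{rem})$, so no such $S'$ exists.

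The main obstacle I expect is making the game-theoretic correspondence fully rigorous rather than the routine algebra. Specifically, I must confirm that $S^*$ is a well-defined observation-based supervisor even though its decisions depend on the memory variable $C_{rem}$ (this holds because $C_{rem}$ is itself a deterministic function of the observed string and of the decisions already taken), and I must carefully match plays of $T_{total}$ with observable strings of the closed-loop system, handling the ``equivalent'' control decisions noted in the earlier footnote so that maximality is compared only over genuinely feasible events. The soundness of the finite-horizon truncation at $L=n^2\cdot N'_{max}$ and of the infinite-value threshold $n\cdot N'_{max}$ must also be invoked correctly to justify that $V^*$ is the true game value, which I would defer to the cited result \cite{wu2016synthesis}.
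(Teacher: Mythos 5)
Your proposal is correct and follows essentially the same route as the paper's own (much briefer, sketch-level) proof: optimality is reduced to the correctness of the value iteration, and maximal permissiveness is argued by contradiction exactly as in Theorem~\ref{thm2}, using the fact that a strictly larger admissible decision would still lie in $Dec_{V^*}(y,C_{rem})$ and thus contradict maximality of $S^*(w)$. Your budget invariant and the explicit first-point-of-difference accounting simply fill in details the paper leaves implicit, so there is nothing to flag.
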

	\begin{proof}
	(Sketch) 
	Suppose that there exists a supervisor $S'$ such that $\texttt{Cost}(S'/G)< \texttt{Cost}(S^*/G)$. 
	Then the value of the initial $Y$-state should be at least $\texttt{Cost}(S'/G)$, which is a contradiction to the result of the value iteration. 
    The proof of maximal permissiveness  is similar to that of Theorem \ref{thm2}. 
	Since we choose a control decision from $Dec_{V^*}^{max}(y,C_{rem})$, 
	any other more permissive choices will result in a supervisor whose worst-case cost is larger than $\texttt{Cost}(S^*/G)$, 
	which will violate the optimality.
	\end{proof}
	
	 \begin{figure}
    \centering
    \includegraphics[width=0.35\textwidth]{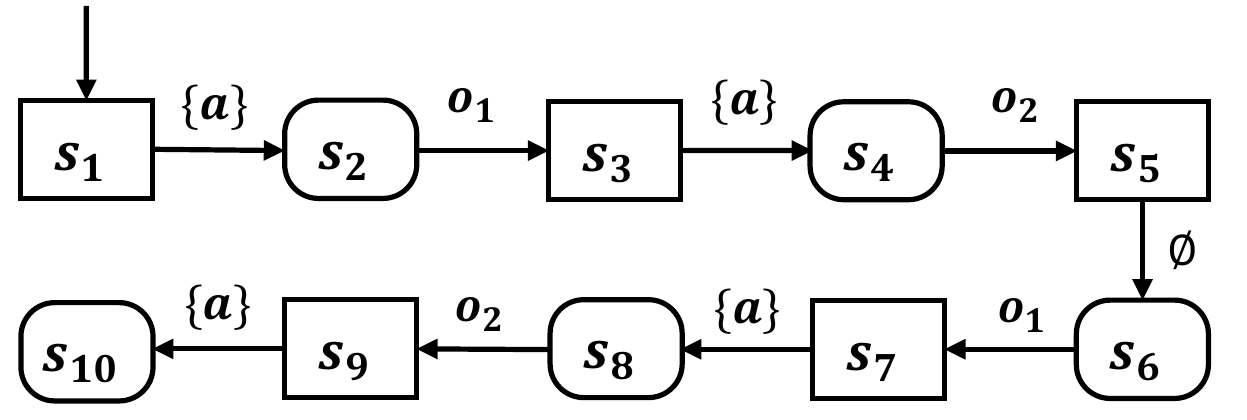} 
    \caption{Solution $S$.}
	\label{Solution3}
    \end{figure}
    
    \begin{remark}\upshape
    Still, in the augmented information-state updating rule, 
    when  $C_a(\imath')\cap X_S=\emptyset$, we do not really need to add $\{( \odot_{\gamma}(C_a(\imath')),0)\}$ to $D_a(\imath')$. 
    This is because the estimate of such an instant will never contribute to the cost function $C_I$ no matter what is observed in the future. For the sake of clarity, we  used the original completed rule in Equation~\eqref{Updating1} for the theoretical developments. However, for the sake of simplicity, we will adopt this simplification in the following illustrative  example. The reader should be aware of this discrepancy. 
    \end{remark}

    Finally, we illustrate how to synthesize an optimal quantitative supervisor by the following example. 

	\begin{myexm}\upshape
    Let us consider system $G$ in  Fig.~\ref{SystemG3}
    and our goal is to synthesize an optimal  supervisor such the secret-revelation cost of the closed-loop system is minimized.
    Suppose that  $N_{max}=5$. 
    First, we construct the largest A-BTS $T_{total}$ that enumerates all possible transitions, which partially shown in  Fig.~\ref{BTS2}.
    For each $Z$-state in $T_{total}$, we find those secret-revealing delayed state estimates and assign each of them a state cost.
    Specifically, we have $C_I(s_{10})=2$, $C_I(s_{13})=5$, $C_I(s_{16})=4$, $C_I(s_{17})=4$, $C_I(s_{20})=3$ and all the other states have zero cost. 
    For example, for state $s_{10}=\{ \{8\},  \{  (\{(2,8)\},3)    \} ,\{a\}    \}$, 
    we have $C_I(s_{10})= C(3)=5-3=2 $ as $ \textsc{Rev}(s_{10})=\{ (\{(2,8)\},3)   \}$. 
    For this example, 
    since there is only one secret state and the system is acyclic, for the sake of simplicity, 
    we omit   all successor states from  $s_{13}, s_{16}, s_{17}$ and $s_{20}$ as these states do no contribute to the cost/value function.

    Next, we iteratively update the value of each state in $T_{total}$ by value iterations.
    The value iteration procedure is shown in Table.\ref{tab:my_label}, which converges in $10$ steps. 
    Then the value of the initial $Y$-state  $V^*(s_1)=2$ is the optimal cost that we can achieve.
    Then we use Algorithm~2 to solve Problem 2.
    The resulting supervisor is shown in  Fig.~\ref{Solution3}.
    At the initial $Y$-state, we have $C_{rem}=V^*(s_1)=2$ and 
    $Dec_{V^*}^{max}(s_1,2)=Dec_{V^*}(s_1,2)=\{ \{ a \}  \}$. 
    Then we choose control decision $\{a\}$ and move to   $Z$-state $s_2$. 
    Note that  cost $C_{rem}$ remains unchanged as $C_I(s_{2})=0$. 
    Then observable event $o_1$ occurs and the system moves to $s_3$.
    At state $s_3$, we have $Dec_{V^*}^{max}(s_3,2 )=\{ \{ a \}  \}$. 
    By choosing control decision $\{a\}$,  we reach $s_4$ and we still have $C_{rem}=2- C_I(s_{4})=2$.
    Then  the system moves to $s_5$ upon the occurrence of event $o_2$.
    At state $s_5$, we have $Dec_{V^*}^{max}(s_5, 2)=\{ \emptyset \}$, then by choosing $\emptyset$ we reach $s_6$. 
    We repeat the above process, and obtain the supervisor shown in   Fig.~\ref{Solution3}, 
    which is an optimal and maximally-permissive solution. 
    \begin{table}
        \centering
        \begin{tabular}{c|c|c|c|c|c|c| c}
        \hline
         $V_i(q)$&0 &1 &2 &3 &\ldots &10& 11\\\hline
         $s_1$   &0 &0 &0 &0 &\ldots &2 & 2\\\hline
         $s_2$   &0 &0 &0 &0 &\ldots &2 & 2\\\hline
         $s_3$   &0 &0 &0 &0 &\ldots &2 & 2\\\hline
         $s_4$   &0 &0 &0 &0 &\ldots &2 & 2\\\hline
         $s_5$   &0 &0 &0 &0 &\ldots &2 & 2\\\hline
         $s_6$   &0 &0 &0 &0 &\ldots &2 & 2\\\hline
         $s_7$   &0 &0 &0 &0 &\ldots &2 & 2\\\hline
         $s_8$   &0 &0 &0 &2 &\ldots &2 & 2\\\hline
         $s_9$   &0 &0 &2 &2 &\ldots &2 & 2\\\hline
         $s_{10}$&0 &2 &2 &2 &\ldots &2 & 2\\\hline
         $s_{11}$&0 &0 &0 &5 &\ldots &5 & 5\\\hline
         $s_{12}$&0 &0 &5 &5 &\ldots &5 & 5\\\hline
         $s_{13}$&0 &5 &5 &5 &\ldots &5 & 5\\\hline
         $s_{14}$&0 &0 &0 &0 &\ldots &4 & 4\\\hline
         $s_{15}$&0 &0 &0 &4 &\ldots &4 & 4\\\hline
         $s_{16}$&0 &0 &4 &4 &\ldots &4 & 4\\\hline
         $s_{17}$&0 &4 &4 &4 &\ldots &4 & 4\\\hline
         $s_{18}$&0 &0 &0 &3 &\ldots &3 & 3\\\hline
         $s_{19}$&0 &0 &3 &3 &\ldots &3 & 3\\\hline
         $s_{20}$&0 &3 &3 &3 &\ldots &3 & 3\\\hline
        \end{tabular}
        \caption{Value iterations of $T_{total}$.}
        \label{tab:my_label}
    \end{table}
	\end{myexm}
	\section{Conclusion}\label{sec:7}
	In this paper, we systematically investigate both qualitative and quantitative synthesis of privacy-enforcing supervisors based on the notion of infinite-step opacity. 
	For the qualitative case, 
	we define a new class of bipartite transition systems that captures the delayed information in the control synthesis problem over a game structure.
	Based on the BTS, we proposed an effective algorithm  that  solves the standard infinite-step opacity control problem without the assumption that all controllable events are observable, which is restrictive and required by the existing work. 
	For the quantitative case, we propose the notion of secret-revelation-time as a quantitative measure for infinite-step opacity. 
	By suitably augmenting the timing information into the BTS, we solve the quantitative synthesis problem as an optimal total-cost control problem. 
	In the future, we plan to extend our results to the stochastic setting by considering the expectation of the secret-revelation cost. 	
	\bibliographystyle{plain}
	\bibliography{des}
		
\end{document}